\newcommand\R{\mathbb R}
\renewcommand\P{\mathbb P}
\newtheorem{theorem}{Theorem}[section]
\newtheorem{definition}{Definition}[section]
\newtheorem{lemma}[theorem]{Lemma}
\newtheorem{problem}[theorem]{Problem}
\newtheorem{assumption}[theorem]{Assumption}
\newcommand{\eqlabel}[1]{\label{#1}\tag{#1}}
\newcommand\E{\mathbb{E}}
\newcommand{\norm}[1]{\ensuremath{\left\lVert #1 \right\rVert}}
\renewcommand{\norm}[1]{\lVert #1 \rVert}
\def\h_#1{\hat{#1}}
\def\wh_#1{\widehat{#1}}
\newcommand\footnoteref[1]{\protected@xdef\@thefnmark{\ref{#1}}\@footnotemark}
\newcommand\set[1]{\left\{#1\right\}} 
\newcommand{\sm}{\text{sim}}
\newcommand{\pd}{\textrm{p}_{\textrm{data}}}
\newcommand{\pc}{\textrm{p}_{\textrm{control}}}
\newcommand{\ms}{\mu_{\textrm{same}}}
\newcommand{\md}{\mu_{\textrm{diff}}}
\title{Auditing for Diversity using Representative Examples\footnote{Accepted for publication at ACM-SIGKDD 2021.}}
\author{Vijay Keswani}
\author{L. Elisa Celis}
\affil{Yale University}
\date{}
\begin{document}

\maketitle

\begin{abstract}
Assessing 
the diversity of a dataset of information associated with people is crucial before using such data for downstream applications.
For a given dataset, this often involves computing the \textit{imbalance} or \textit{disparity} in the empirical marginal distribution of a protected attribute (e.g. gender, dialect, etc.).
However, real-world datasets, such as images from Google Search or collections of Twitter posts, often do not have protected attributes labeled.
Consequently, to derive disparity measures for such datasets, the elements need to hand-labeled or crowd-annotated,
which are expensive processes.

We propose a cost-effective approach to approximate the disparity of a given unlabeled dataset, with respect to a protected attribute, using a control set of labeled representative examples.
Our proposed algorithm uses the pairwise similarity between elements in the dataset and elements in the control set to effectively
bootstrap 
an approximation to the disparity of the dataset.
Importantly, we show that using a control set whose size is much smaller than the size of the dataset is sufficient to achieve a small approximation error.
Further, based on our theoretical framework, we also provide an algorithm to construct \textit{adaptive} control sets that achieve smaller approximation errors than randomly chosen control sets.
Simulations on two image datasets and one Twitter dataset demonstrate the efficacy of our approach (using random and adaptive control sets) in auditing the diversity of a wide variety of datasets.

\end{abstract}

\newpage
\tableofcontents
\newpage


\section{Introduction}

Mechanisms to audit the \textit{diversity} of a dataset are necessary to assess the shortcomings of the dataset in representing the underlying distribution accurately.
In particular, any dataset containing information about people should suitably represent all social groups (defined by attributes such as gender, race, age, etc.) present in the underlying population in order to mitigate disparate outcomes and impacts in downstream applications \cite{buolamwini2018gender,caliskan2017semantics}.
However, many real-world and popular data sources suffer from the problem of disproportionate representation of minority groups \cite{noble2018algorithms,o2016weapons}.
For example, prior work has shown that the top results in Google Image Search for occupations are more gender-biased than the ground truth of the gender distribution in that occupation \cite{kay2015unequal,singh2019female,celis2020implicit}.

Given the existence of biased data collections in mainstream media and web sources, methods to audit the diversity of generic data collections can help quantify and mitigate the existing biases in multiple ways.
First, it gives a baseline idea of the demographic distribution in the collection and its deviation from the true distribution of the underlying population.
Second, stereotypically-biased representation of a social group in any data collection can lead to further propagation of negative stereotypes associated with the group \cite{harris1982mammies, word1974nonverbal, collins2002black} and/or induce incorrect perceptions about the group \cite{shrum1995assessing, gerbner1986living}.
A concrete example is the evidence of stereotype-propagation via  
biased Google Search results \cite{kay2015unequal, noble2018algorithms}.
These stereotypes and biases can be further exacerbated via machine learning models trained on the biased collections \cite{buolamwini2018gender, caliskan2017semantics,o2016weapons}.
Providing an easy way to audit the diversity in these collections can help the users of such collections assess the potential drawbacks and pitfalls of employing them for downstream applications.

Auditing the diversity of any collection with respect to a protected attribute primarily involves looking at the \textit{disparity} or \textit{imbalance} in the empirical marginal distribution of the collection with respect to the protected attribute.
For example, from prior work \cite{celis2020implicit}, we know that the top 100 Google Image Search results for CEOs in 2019 contained around 89 images of men and 11 images of women; in this case, we can quantify the disparity in this dataset, with respect to gender, as the difference between the fraction of minority group images and the fraction of majority group images, i.e., as $0.11{-}0.89{=}-0.78$.
The sign points to the direction of the disparity while the absolute value quantifies the extent of the disparity in the collection.
Now suppose that, instead of just 100 images, we had multiple collections with thousands of query-specific images, as in the case of Google Image Search. 
Since these images have been scraped or generated from different websites, the protected attributes of the people in the images will likely not be labeled at the source.
In the absence of protected attribute information, the task of simply auditing the diversity of these large collections (as an end-user) becomes quite labor-intensive.
Hand-labeling large collections can be extremely time-expensive, while using crowd-annotation tools (e.g. Mechanical Turk) can be very costly.
For a single collection, labeling a small subset (sampled i.i.d. from the collection) can be a reasonable approach to approximate the disparity; however, for multiple collections, this method is still quite expensive since, for every new collection, we will have to re-sample and label a new subset. It also does not support the addition/removal of elements to the collection.
One can, alternately, use automated models to infer the protected attributes; although, for most real-world applications, these supervised models need to be trained on large labeled datasets (which may not be available) and pre-trained models might encode their own pre-existing biases \cite{buolamwini2018gender}.

We, therefore, question if there is a cost-effective method to audit the diversity of large collections from a domain when the protected attribute labels of elements in the collections are unknown.

\noindent
\subsection{Our contributions}
The primary contribution of this paper is an algorithm to evaluate the diversity of a given unlabeled collection with respect to any protected attribute (Section~\ref{sec:model}).
Our algorithm takes as input the collection to be audited, a small set of labeled representative elements, called the \textit{control set}, and a metric that quantifies the similarity between any given pair of elements.
Using the control set and the similarity metric, our algorithm returns a proxy score of disparity in the collection with respect to the protected attribute.
The same control set can be used for auditing the diversity of any collection from the same domain.

The control set and the similarity metric are the two pillars of our algorithm, and we theoretically show the dependence of the effectiveness of our framework on these components.
In particular, the proxy measure returned by our algorithm approximates the true disparity measure with high probability, with the approximation error depending on the size and quality of the control set, and the quality of the similarity metric.
The protected attributes of the elements of the control set are expected to be labeled; however, the primary advantage of our algorithm is that the size of the control set can
be much smaller than the size of the collection to achieve small approximation error (Section~\ref{sec:ppb_discussion}).
Empirical evaluations on the Pilots Parliamentary Benchmark (PPB) dataset \cite{buolamwini2018gender} show that our algorithm, using randomly chosen control sets and cosine similarity metric, can indeed provide a reasonable approximation of the underlying disparity in any given collection (Section~\ref{sec:ppb_random_control}).

To further reduce the approximation error, we propose an algorithm to construct \textit{adaptive} control sets (Section~\ref{sec:adaptive_control}).
Given a small labeled auxiliary dataset, our proposed control set construction algorithm 
selects the elements that can best differentiate between samples with the same protected attribute type and samples with different protected attribute types.
We further ensure that the elements in the chosen control set are \textit{non-redundant} and \textit{representative} of the underlying population.
Simulations on PPB dataset, CelebA dataset \cite{liu2015faceattributes} and TwitterAAE dataset \cite{blodgett2016demographic} show that using cosine similarity metric and adaptive control sets, we can effectively approximate the disparity in random and topic-specific collections, with respect to a given protected attribute (Section~\ref{sec:adaptive_expts}).

\noindent
\subsection{Related work}
With rising awareness around the existence and harms of machine and algorithmic biases, prior research has explored and quantified disparities in data collections from various domains.
When the dataset in consideration has labeled protected attributes, the task of quantifying the disparity is relatively straightforward.
For instance, \citet{davidson2019racial} demonstrate racial biases in automated offensive language detection by using datasets containing Twitter posts with dialects labeled by the authors or domain experts.
\citet{larrazabal2020gender} can similarly analyze the impact of gender-biased medical imaging datasets since the demographic information associated with the images are available at source. 
However, as mentioned earlier, protected attribute labels for elements in a collection may not be available, especially if the collection contains elements from different sources.

In the absence of protected attribute labels from the source, crowd-annotation is one way of obtaining these labels and auditing the dataset.
To measure the gender-disparity in Google Image Search results, \citet{kay2015unequal} crowd-annotated a small subset of images and compared the gender distribution in this small subset to the true gender distribution in the underlying population.
Other papers on diversity evaluation have likewise used a small labeled subset of elements \cite{bolukbasi2016man, schwemmer2020diagnosing} to derive inferences about larger collections.
As discussed earlier, the problem with this approach is that it assumes that the disparity in the small labeled subset is a good approximation of the disparity in the given collection.
This assumption does not hold when we want to estimate the diversity of a new/multiple collections from the same domain or when elements can be continuously added/removed from the collection.
Our method, instead, uses a given small labeled subset to approximate the disparity measure of any collection from the same domain.
{Semi-supervised learning also explores learning methods that combine labeled and unlabeled samples \cite{zhu2009introduction}. The labeled samples are used to train an initial learning model and the unlabeled samples are then employed to improve the model generalizability.
Our proposed algorithm has similarities with the semi-supervised self-training approach \cite{bair2013semi}, but is faster and more cost-efficient 
(Section~\ref{sec:ppb_discussion}).
}

Representative examples have been used for other bias-mitigation purposes in recent literature.
Control or reference sets have been used for gender and skintone-diverse image summarization \cite{celis2020implicit}, dialect-diverse Twitter summarization \cite{celis2021dialect}, and fair data generation \cite{choi2020fair}.
\citet{kallus2020assessing} also employ reference sets for bias assessments; they approximate the disparate impact of prediction models in the absence of protected attribute labels.
In comparison, our goal is to evaluate representational biases in a given collection.

\section{Notations} \label{sec:notations}

\noindent
Let $S := \set{x_j}_{j=1}^N$ denote the collection to be evaluated. Each element in the collection consists a $d$-dimensional feature vector $x$, from domain $\mathcal{X} \subseteq \R^d$.
Every element $j$ in $S$ also has a protected attribute, $z_j \in \set{0,1}$, associated with it; however, we will assume that the protected attributes of the elements in $S$ are unknown.
Let $S_i := \set{x_j, j \in [N] \mid z_j =i}$.
A measure of disparity in $S$ with respect to the protected attribute is $d(S) := |S_0|/|S| - |S_1|/|S|$, i.e., the difference between fraction of elements from group 0 and group 1. A dataset $S$ is considered to be \textit{diverse} with respect to the protected attribute if this measure is 0, {and high $|d(S)|$ implies low diversity in $S$.}
Our goal will be to estimate this value for any given collection\footnote{Our proposed method can be used for other metrics that estimate imbalance in distribution of protected attribute as well (such as $|S_0|/|S|$);
however, for the sake of simplicity, we will limit our analysis to $d(S)$ evaluation.}
\footnote{We present the model and analysis for binary protected attributes. 
To extend the framework for non-binary protected attributes with $k$ possible values, one can alternately 
define disparity as $\max_{i \in [k]}|S_i| - \min_{i \in [k]}|S_i|$. }
Let $\pd$ denote the underlying distribution of the collection $S$.

\noindent
\paragraph{Control Set.} Let $T$ denote the control set of size $m$, i.e., a small set of representative examples. Every element $T$ also also has a feature vector from domain $\mathcal{X}$ and a protected attribute associated with it. 
Let $T_i := \set{x_j, j \in [m] \mid z_j =i}$.
Importantly, the protected attributes of the elements in the control set are known and we will primarily employ control sets that have equal number of elements from both protected attribute groups, i.e., $|T_0| = |T_1|$.
The size of control set is also much smaller than the size of the collection being evaluated, i.e., $|T| \ll |S|$.
Let $\pc$ denote the underlying distribution of the control set $T$.

Throughout the paper, we will also use the notation $a \in b \pm c$ to denote that $a \in [b-c, b+c]$.
The problem we tackle in this paper is auditing the diversity of $S$ using $T$; it is formally stated below.
\begin{problem} \label{pr:audit}
Given a collection $S$ (with \textbf{unknown} protected attributes of elements) and a balanced control set $T$ (with \textbf{known} protected attributes of elements), can we use $T$ to approximate $d(S)$?
\end{problem}

\section{Model and Algorithm} \label{sec:model}

The main idea behind using the control set $T$ to solve Problem~\ref{pr:audit} is the following: for each element $x \in S$, we can use the partitions $T_0, T_1$ of the control set to check which partition is most \textit{similar} to $x$. If most elements in $S$ are \textit{similar} to $T_0$, then $S$ can be said to have more elements with protected attribute $z{=}0$ (similarly for $z{=}1$).
However, to employ this audit mechanism we need certain conditions on the \textit{relevance} of the control set $T$, as well as, a metric that can quantify the similarity of an element in $S$ to control set partitions $T_0, T_1$. 
We tackle each issue independently below.

\noindent
\subsection{Domain-relevance of the control set}
To ensure that the chosen control set is representative and relevant to the domain of the collection in question, we will need the following assumption.
\begin{assumption} \label{asm:approp_control_set}
For any $x \in \mathcal{X}$, $\pd(x{\mid}z) = \pc(x{\mid}z)$, for all $z \in \set{0,1}$.
\end{assumption}
\noindent
This assumption states that the elements of control set are from the have the same conditional distribution as the elements of the collection $S$.
It roots out settings where one would try to use non-representative control sets for diversity audits (e.g., full-body images of people to audit the diversity of a collection of portrait images).
{Note that despite similar conditional distributions, the control set and the collection can (and most often will) have different protected attribute marginal distributions.}

We will use the notation $p_z(x)$ to denote the conditional distribution of $x$ given $z$ in the rest of the document, i.e., $p_z(x) := \pd(x \mid z) = \pc(x \mid z)$.
Given a collection $S$, we will call a control set $T$ (with partitions $T_0$, $T_1$) \textit{domain-relevant} if the underlying distribution of $T$ satisfies Assumption~\ref{asm:approp_control_set}.

\noindent
\subsection{Similarity metrics}
Note that even though $p_z(x)$ is the same for both the control set and the collection, the distributions $p_0(x)$ and $p_1(x)$ can be very different from each other, and our aim we will be to design and use similarity metrics that can differentiate between elements from the two conditional distributions.

A general pairwise similarity matrix $\sm: \mathcal{X} \times \mathcal{X} \rightarrow \R_{\geq 0}$ takes as input two elements and returns a non-negative score of similarity between the elements; the higher the score, the more similar are the elements.
For our setting we need a similarity metric that can, \textit{on average}, differentiate between elements that have the same protected attribute type and elements that have different protected attribute types.
Formally, we define such a similarity metric as follows.
\begin{definition}[$\gamma$-similarity metric] \label{defn:similarity}
Suppose we are given a similarity metric $\sm: \mathcal{X} \times \mathcal{X} \rightarrow [0,1]$, such that
\[\E_{x_1, x_2 \sim p_z}\left[ \sm(x_1, x_2) \right] = \ms \text{ and } \]
\[\E_{x_1 \sim p_{z_1}, x_2 \sim p_{z_2}, z_1 \neq z_2}\left[ \sm(x_1, x_2) \right] = \md.\]
Then for $\gamma \geq 0$, we call $\sm$ a $\gamma$-similarity metric if $\ms - \md \geq \gamma.$
\end{definition}
\noindent
Note that the above definition is not very strict; we do not require $\sm(\cdot, \cdot)$ to return a large similarity score for every pair of elements with same protected attribute type or to return a small similarity score for every pair of elements with different protected attribute types. 
Rather $\sm(\cdot, \cdot)$, only \textit{in expectation}, should be able to differentiate between elements from same groups and elements from different groups.
In a later section, we show that cosine similarity metric indeed satisfies this condition for real-world datasets.

\subsection{Algorithm}
Suppose we are given a \textit{domain-relevant} control set $T$ that satisfies Assumption~\ref{asm:approp_control_set} (with partitions $T_0$ and $T_1$) and a $\gamma$-similarity metric $\sm(\cdot, \cdot)$.
With slight abuse of notation, for any element $x \in S$, let $\sm(x, T_i) = \frac{1}{|T_i|} \sum_{y \in T_i} \sm(x,y)$ and let $\sm(S,T_i) = \frac{1}{|S|}\sum_{x \in S} \sm(x, T_i)$.
Let $\hat{d}(S) := \sm(S, T_0) - \sm(S,T_1)$.
%
We propose the use of $\hat{d}(S)$ (after appropriate normalization) as a proxy measure for $d(S)$; Algorithm~\ref{alg:main} presents the complete details of this proxy diversity score computation and Section~\ref{sec:theory} provides bounds on the approximation error of $\hat{d}(S)$.
We will refer to the value returned by Algorithm~\ref{alg:main} as \textit{DivScore} for the rest of the paper.

\begin{algorithm} 
\caption{\textit{DivScore}: Algorithm for proxy diversity audit}
\label{alg:main}
\begin{flushleft}
    \textbf{Input:} Dataset $S$, control set $T := T_0 \cup T_1$, similarity metric $\sm(\cdot, \cdot)$
\end{flushleft}
\begin{algorithmic}[1]
\State $l\gets \frac{1}{|T_0|\cdot |T_1|} \sum_{x, y \in T_0 \times T_1} \sm(x,y)$
\State $u_0\gets \frac{1}{|T_0|\cdot (|T_0| - 1)} \sum_{x\in T_0, y\in T_0\setminus \set{x}} \sm(x,y)$
\State $u_1\gets \frac{1}{|T_1|\cdot (|T_1| - 1)} \sum_{x\in T_1, y\in T_1\setminus \set{x}} \sm(x,y)$
\State Compute $\sm(S,T_0) \gets \frac{1}{|S| \cdot|T_0|} \sum_{x,y \in S \times T_0} \sm(x,y)$
\State $s_0 \gets (\sm(S,T_0) - l)/(u_0-l)$
\State Compute $\sm(S,T_1) \gets \frac{1}{|S| \cdot|T_1|} \sum_{x,y \in S \times T_1} \sm(x,y)$
\State $s_1 \gets (\sm(S,T_1) - l)/(u_1-l)$
\State \textbf{return} $s_0 - s_1$ 
\end{algorithmic}
\end{algorithm}

\subsection{Theoretical analysis} \label{sec:theory}

To prove that $\hat{d}(S)$ is a good proxy measure for auditing diversity,
we first show that if $x \in S_i$, then $\sm(x,T_i) > \sm(x,T_j)$, for $j = 1-i$, with high probability and quantify the exact difference using the following lemma.
For the analysis in this section, assume that the elements in $T_0$, $T_1$ have been sampled i.i.d. from conditional distribution $p_0, p_1$ respectively and $|T_0| = |T_1|$.

\begin{lemma} \label{lem:single_elem}
For $i \in \set{0,1}$, any $x \in S_i$ and $\delta >0$, with probability atleast $1 - 2e^{-\delta^2 \md |T| /6} \cdot (1 + e^{-\delta^2 \gamma |T| /6} )$, we have
\begin{align*}
\sm(x, T_i) - \sm(x,T_{1-i}) \in  \ms - \md \pm \delta (\ms + \md). \eqlabel{1}
\end{align*}
\end{lemma}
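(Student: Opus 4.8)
The plan is to fix a group $i \in \set{0,1}$ and an element $x \in S_i$, and to control the two empirical averages $\sm(x,T_i)$ and $\sm(x,T_{1-i})$ separately by a multiplicative Chernoff bound before subtracting them. Because $x \in S_i$, Assumption~\ref{asm:approp_control_set} lets us regard $x$ as a sample from $p_i$, and by hypothesis the elements of $T_i$ (resp.\ $T_{1-i}$) are drawn i.i.d.\ from $p_i$ (resp.\ $p_{1-i}$), independently of each other. Conditioning on $x$, the summands $\set{\sm(x,y): y \in T_i}$ are i.i.d.\ random variables taking values in $[0,1]$ with same-group mean $\ms$, while $\set{\sm(x,y): y\in T_{1-i}}$ are i.i.d.\ in $[0,1]$ with different-group mean $\md$. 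Thus $\sm(x,T_i)$ is an average of $|T_i|=|T|/2$ bounded terms with mean $\ms$, and $\sm(x,T_{1-i})$ is an average of $|T_{1-i}|=|T|/2$ bounded terms with mean $\md$; this reduces the lemma to two independent concentration statements.

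First I would apply the two-sided multiplicative Chernoff bound to each average. For $n$ i.i.d.\ samples in $[0,1]$ with mean $\mu$, the average deviates from $\mu$ by more than $\delta\mu$ with probability at most $2e^{-\delta^2\mu n/3}$ (valid for $\delta \in (0,1]$). Substituting $n=|T|/2$ and $\mu \in \set{\ms,\md}$ yields
\[
\Pr\!\left[\abs{\sm(x,T_i)-\ms} > \delta\ms\right] \le 2e^{-\delta^2 \ms |T|/6}, \qquad \Pr\!\left[\abs{\sm(x,T_{1-i})-\md} > \delta\md\right] \le 2e^{-\delta^2 \md |T|/6},
\]
where the factor of $6$ arises from the Chernoff constant $3$ together with $|T_i|=|T|/2$. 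A union bound then guarantees that, with probability at least $1 - 2e^{-\delta^2\ms|T|/6} - 2e^{-\delta^2\md|T|/6}$, both averages lie in the intervals $\ms(1\pm\delta)$ and $\md(1\pm\delta)$ simultaneously.

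On this event the claimed containment follows by subtracting the intervals: the largest value of $\sm(x,T_i)-\sm(x,T_{1-i})$ is $\ms(1+\delta)-\md(1-\delta)=(\ms-\md)+\delta(\ms+\md)$ and the smallest is $\ms(1-\delta)-\md(1+\delta)=(\ms-\md)-\delta(\ms+\md)$, i.e.\ $\sm(x,T_i)-\sm(x,T_{1-i}) \in \ms-\md \pm \delta(\ms+\md)$. It then remains only to rewrite the failure probability in the stated factored form: since $\sm$ is a $\gamma$-similarity metric we have $\ms \ge \md+\gamma$, so $e^{-\delta^2\ms|T|/6} \le e^{-\delta^2(\md+\gamma)|T|/6}=e^{-\delta^2\md|T|/6}\,e^{-\delta^2\gamma|T|/6}$, and the total failure probability is at most $2e^{-\delta^2\md|T|/6}\bigl(1+e^{-\delta^2\gamma|T|/6}\bigr)$, exactly matching the lemma.

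The step I expect to require the most care is the reduction in the first paragraph, because the single element $x$ is shared across every summand $\sm(x,y)$, so the terms are only conditionally i.i.d.\ given $x$ rather than unconditionally independent, and for a \emph{fixed} $x$ the conditional means $\E_{y\sim p_i}[\sm(x,y)]$ and $\E_{y\sim p_{1-i}}[\sm(x,y)]$ need not equal the population quantities $\ms$ and $\md$. The clean resolution is to condition on $x$, apply Chernoff to the conditionally independent summands, and take the outer randomness over $x\sim p_i$ so that the population means $\ms,\md$ are the relevant targets; one must verify that the multiplicative form of the bound, which is what places $\ms$ and $\md$ (and hence $\gamma$) into the exponents, is the object that survives this conditioning, since that placement is precisely what drives the final probability simplification.
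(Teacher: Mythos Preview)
Your proposal is correct and mirrors the paper's proof exactly: apply a two-sided multiplicative Chernoff bound to each of $\sm(x,T_i)$ and $\sm(x,T_{1-i})$ with $n=|T|/2$, union-bound the two failure events, factor the resulting sum via $\ms\ge\md+\gamma$, and subtract the two $(1\pm\delta)$ intervals. The subtlety you flag in your last paragraph---that for a \emph{fixed} $x$ the conditional means $\E_{y\sim p_i}[\sm(x,y)]$ need not equal the population quantities $\ms,\md$---is glossed over identically in the paper, which simply asserts ``for any $y\in T_0$, $\E[\sm(x,y)]=\ms$'' without further comment; so your argument is at least as careful as the published one.
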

\noindent
The lemma basically states that a $\gamma$-similarity metric, with high probability, can differentiate between $\sm(x, T_i)$ and $\sm(x, T_{1-i})$. 
The proof uses the fact that since $T$ is domain-relevant and the elements of $T$ are i.i.d. sampled from the conditional distributions, for any $y \in T_0$, $\E[\sm(x,y)] = \ms$ and for any $y \in T_1$, $\E[\sm(x,y)] = \md$.
Then, the statement of the lemma can be proven using standard Chernoff-Hoeffding concentration inequalities \cite{hoeffding1994probability,mitzenmacher2017probability}.
The complete proof is presented in Appendix~\ref{sec:proofs}.
Note that even though $\sm$ was defined to differentiate between protected attribute groups in expectation, by averaging over all control set elements in $T_0, T_1$, we are able to differentiate across groups with high probability.

The lemma also partially quantifies the dependence on $|T|$ and $\gamma$.
Increasing the size of control set $T$ will lead to higher success probability. Similarly, larger $\gamma$ implies that the similarity metric is more powerful in differentiating between the groups, which also leads to higher success probability.
Using the above lemma, we can next prove that the proposed diversity audit measure is indeed a good approximation of the disparity in $S$.
Recall that, for the dataset $S$, $\sm(S,T_i) = \frac{1}{|S|}\sum_{x \in S} \sm(x, T_i)$.

\begin{theorem}[Diversity audit measure] \label{thm:main}
For protected attribute $z \in \set{0,1}$, let $p_z$ denote the underlying conditional distribution $\pd(x{\mid}z)$.
Suppose we are a given a dataset $S$ containing i.i.d. samples from $\pd$, a domain-relevant control set $T$ (with pre-defined partitions by protected attribute $T_0$ and $T_1$, such that $|T_0| = |T_1|$) and a similarity metric $\sm : \mathcal{X}^2 \rightarrow \R_{\geq 0}$, such that if \\$\ms = \E_{x_0, x_1 \sim p_z}\left[ \sm(x_0, x_1) \right],$ $\md = \E_{x_0 \sim p_0, x_1 \sim p_1 }\left[ \sm(x_0, x_1) \right],$
then $\ms - \md \geq \gamma$, for $\gamma > 0$.

\noindent
Let $\delta = \sqrt{\frac{6\log (20|S| )}{ |T| \min(\md, \gamma)}}$ and let $\hat{d}(S) := \sm(S, T_0) - \sm(S,T_1)$.
Then,
with high probability, 
$\hat{d}(S)/(\ms - \md)$ approximates $d(S)$ within an additive error of $ \delta \cdot (\ms + \md)/(\ms - \md)$.

\noindent
In particular, with probability $\gtrapprox$ 0.9,
\[\hat{d}(S) \in(\ms - \md) \cdot  d(S) \pm \delta \cdot (\ms + \md).\]
\end{theorem}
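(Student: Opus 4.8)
The plan is to reduce the global statement about $\hat{d}(S)$ to the per-element estimate already established in Lemma~\ref{lem:single_elem}, and then control all elements of $S$ simultaneously via a union bound. First I would use linearity to write $\hat{d}(S) = \sm(S,T_0) - \sm(S,T_1) = \frac{1}{|S|}\sum_{x\in S}\left(\sm(x,T_0) - \sm(x,T_1)\right)$, and split this sum according to the (unknown) group membership of each $x$, i.e.\ over $S_0$ and $S_1$ separately. For $x \in S_0$ the summand is exactly the quantity $\sm(x,T_i)-\sm(x,T_{1-i})$ with $i=0$, so Lemma~\ref{lem:single_elem} places it in $(\ms-\md)\pm\delta(\ms+\md)$. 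For $x\in S_1$ the summand equals $-\left(\sm(x,T_1)-\sm(x,T_0)\right)$; applying Lemma~\ref{lem:single_elem} with $i=1$ and negating puts it in $-(\ms-\md)\pm\delta(\ms+\md)$.

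Conditioning on the event that all $|S|$ per-element bounds hold, I would then sum the contributions. The main terms give $\frac{1}{|S|}\left(|S_0|(\ms-\md) - |S_1|(\ms-\md)\right) = (\ms-\md)\cdot\frac{|S_0|-|S_1|}{|S|} = (\ms-\md)\,d(S)$, which is precisely the target centre. Each of the $|S|$ error terms is bounded in absolute value by $\delta(\ms+\md)$, and since they are averaged by $1/|S|$ their aggregate contribution remains at most $\delta(\ms+\md)$. This yields $\hat{d}(S)\in(\ms-\md)\,d(S)\pm\delta(\ms+\md)$ on the good event, and dividing through by $(\ms-\md)>0$ gives the claimed additive approximation of $d(S)$ with error $\delta(\ms+\md)/(\ms-\md)$.

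It remains to show the good event has probability $\gtrapprox 0.9$. Lemma~\ref{lem:single_elem} fails for a fixed $x$ with probability at most $2e^{-\delta^2\md|T|/6}\left(1+e^{-\delta^2\gamma|T|/6}\right)$, and since the same control set $T$ is shared across all $x$, I would take a union bound over the $|S|$ elements (which requires no independence) to bound the total failure probability by $|S|$ times this quantity. Substituting $\delta = \sqrt{6\log(20|S|)/(|T|\min(\md,\gamma))}$ makes both $\delta^2\md|T|/6 \geq \log(20|S|)$ and $\delta^2\gamma|T|/6 \geq \log(20|S|)$ (using $\md,\gamma \geq \min(\md,\gamma)$), so each exponential is at most $1/(20|S|)$. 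The per-element failure probability is then at most $\frac{1}{10|S|}\left(1+\frac{1}{20|S|}\right)$, and the union bound gives total failure $\frac{1}{10}\left(1+\frac{1}{20|S|}\right)\approx 0.1$, hence success probability $\gtrapprox 0.9$.

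The computations are routine given Lemma~\ref{lem:single_elem}; the points needing care are the sign bookkeeping in the $S_1$ case (so that the main terms combine into $d(S)$ rather than cancelling), and the observation that averaging by $1/|S|$ prevents the $|S|$ error terms from accumulating. The one genuinely delicate step is the calibration of $\delta$: it must be taken large enough that the two exponential tail bounds survive the union-bound factor of $|S|$, yet it feeds directly into the approximation error $\delta(\ms+\md)/(\ms-\md)$. The logarithmic choice $\delta = \Theta\!\left(\sqrt{\log|S|/(|T|\min(\md,\gamma))}\right)$ is precisely what balances a $0.9$ confidence level against an error that shrinks as the control set size $|T|$ grows and as the similarity metric becomes more discriminative (larger $\gamma$).
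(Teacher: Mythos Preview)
Your proposal is correct and matches the paper's proof essentially line for line: apply Lemma~\ref{lem:single_elem} to every $x\in S$, union-bound over the $|S|$ elements, sum the per-element intervals (splitting over $S_0$ and $S_1$) to obtain $(\ms-\md)\,d(S)\pm\delta(\ms+\md)$, and then plug in the stated $\delta$ to verify the $\gtrapprox 0.9$ success probability. Your write-up is in fact more explicit than the paper's about the sign handling for $x\in S_1$ and about why the averaged error does not grow with $|S|$, but the argument is the same.
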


\noindent
Theorem~\ref{thm:main} states that, with high probability, $d(S)$ is contained in a small range of values determined by $\hat{d}(S)$, i.e., $$d(S) \in \left( \hat{d}(S) \pm \delta \cdot (\ms + \md) \right) / \left( \ms - \md) \right).$$ 
The theoretical analysis is line with the implementation in Algorithm~\ref{alg:main} (\textit{DivScore}), i.e., the algorithm computes $\hat{d}(S)$ and normalizes it appropriately using estimates of $\ms$ and $\md$ derived from the control set.
The proof is presented in Appendix~\ref{sec:proofs_2}.

Note that Theorem~\ref{thm:main} assumes that $\ms{=}\E_{x_0, x_1 \sim p_z}\left[ \sm(x_0, x_1) \right]$ is the same for both $z \in \set{0,1}$.
However, they may not be same in practice and \textit{DivScore} uses separate upper bounds for $z{=}0$ and $z{=}1$ ($u_0$ and $u_1$ respectively).
Similarly, we don't necessarily require a balanced control set (although, as discussed in Section~\ref{sec:ppb_discussion}, a balanced control set is preferable over an imbalanced one).
We keep the theoretical analysis simple for clarity, but both these changes can be incorporated in Theorem~\ref{thm:main} to derive similar bounds as well.

The dependence of error on $\gamma$ and $T$ can also be inferred from Theorem~\ref{thm:main}.
The denominator in the error term in Theorem~\ref{thm:main} is lower bounded by $\gamma$. Therefore, the larger the $\gamma$, the lower is the error and the tighter is the bound.
The theorem also gives us an idea of the size of control set required to achieve low $\delta$ error and high success probability. 
To keep $\delta$ small, we can choose a control set $T$ with $|T| = \Omega\left(\log |S|\right)$. In other words, a control set of size $c\log |S|$ elements, for an appropriate $c>1$, should be sufficient to obtain low approximation error.
Since the control sets are expected to have protected attribute labels (to construct partitions $T_0$ and $T_1$), having small control sets will make the usage of our audit algorithm much more tractable.

\noindent
\paragraph{Cost of \textit{DivScore}.}
The time complexity of Algorithm~\ref{alg:main} (\textit{DivScore}) is $O(|S|{\cdot}|T|)$, and it only requires $|T|$ samples (control set) to be labeled.
In comparison, if one was to label the entire collection to derive $d(S)$, the time complexity would be $O(|S|)$, but all $|S|$ samples would need to be labeled.
With a control set $T$ of size $\Omega\left(\log |S|\right)$, our approach is much more cost-effective.
The elements of $T$ are also not dependent on elements of $S$; hence, the same control set can be used for other collections from the same domain.

\begin{figure*}
    \centering
    \includegraphics[width=0.9\linewidth]{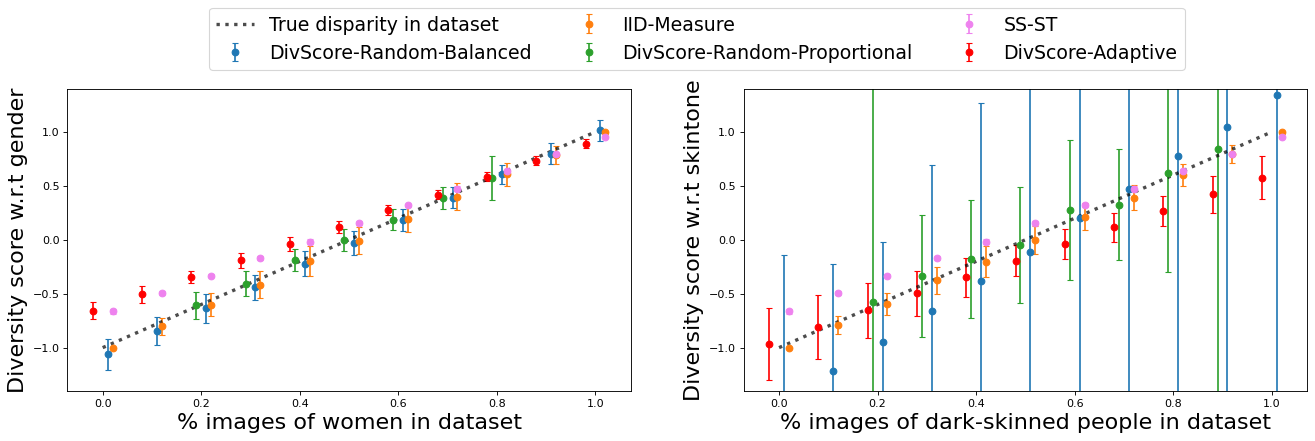}
    \subfloat[Gender protected attribute]{\hspace{.5\linewidth}}
    \subfloat[Skin-tone protected attribute]{\hspace{.5\linewidth}}
    \caption{\small Results for PPB-2017 dataset using random and adaptive control sets. The reported performance is the mean of output from \textit{DivScore} across 100 repetitions (errorbars denote standard error).
    To improve readability, we limit the y-axis to range to $[-1.5,1.5]$, which results in trimmed errorbands for some methods; {we present the same expanded plots without axis restrictions in Appendix~\ref{sec:other_results}}.
    The protected attribute considered here are gender and skintone.
    The x-axis reports the fraction of $z=0$ images in the collection ($\in \set{0, 0.1, 0.2, \dots, 1.0}$) and, for each collection, we report the following five metrics in y-axis: true disparity of the collection, \textit{DivScore-Random-Balanced}, \textit{DivScore-Random-Proportional}, \textit{IID-Measure}, and \textit{DivScore-Adaptive}.
   {A collection is considered diverse if the diversity score (y-axis) is 0; the larger the deviation of diversity score from 0, the lower the diversity is in the evaluated collection.}
    Amongst all metrics, \textit{DivScore-Adaptive}, \textit{IID-Measure}, and \textit{SS-ST} seem to have the lowest standard error.
    However, using \textit{IID-Measure} and \textit{SS-ST} are much costlier than \textit{DivScore}, as discussed in Section~\ref{sec:ppb_discussion}.
    }
    \label{fig:ppb_results_all}
\end{figure*}

\section{Empirical Evaluation Using Random Control Sets} \label{sec:experiments}

We first demonstrate the efficacy of the \textit{DivScore} algorithm on a real-world dataset using random, \textit{domain-relevant} control sets.

\subsection{PPB-2017 dataset} \label{sec:ppb_random_control}
The PPB (Pilots Parliamentary Benchmark) dataset consists of 1270 portrait images of parliamentarians from six different countries\footnote{\url{gendershades.org}}. The images in this dataset are labeled with gender (male vs female) and skin-tone (values are the 6 types from the Fitzpatrick skin-type scale \cite{fitzpatrick2008fitzpatrick}) of the person in the image.
This dataset was constructed and curated by \citet{buolamwini2018gender}.
We will use gender and skintone as the protected attributes for our diversity audit analysis.

\noindent
\paragraph{Methodology.}
We first the split the dataset into two parts: the first containing 200 images and the second containing 1070 images.
The first partition is used to construct control sets,
while the second partition is used for diversity audit evaluation.
Since we have the gender and skin-tone labels for all images, we can construct sub-datasets of size 500 with custom distribution of protected attribute types.
In other words, for a given $f \in \set{0, 0.1, 0.2, \dots, 1.0}$, we construct a sub-dataset $S$ of the second partition containing $f\cdot|S|$ images corresponding to protected attribute $z=0$.
Hence, by applying Algorithm~\ref{alg:main} (\textit{DivScore}) using a given control set $T$, we can assess the performance of our proxy measure for collection with varying fraction of under/over-represented group elements.

When protected attribute is gender, $z=0$ will denote $g = \text{female}$, when protected attribute is skin-tone, $z=0$ will denote $s > 3$ (skin-tone types corresponding to dark-skin), and when protected attribute is intersection of gender and skin-tone, $z=0$ will denote $g = \text{female and } s > 3$ (corresponding to dark-skinned women).

\noindent
\paragraph{Control sets.} 
{To evaluate the performance of \textit{DivScore}
the selection of elements for the control sets (of size 50 from the first partition) can be done in multiple ways: 
(1) \textit{random balanced control sets}, i.e., randomly block-sampled control sets with equal number of $z{=}0$ and $z{=}1$ images; (2) \textit{random proportional control sets}, i.e., control sets i.i.d. sampled from the collection in question; (3) \textit{adaptive control sets}, i.e., non-redundant control sets that can best differentiate between samples with the same protected attribute type and samples with different protected attribute types. 
The complete details of construction of \textit{adaptive control sets} is given in Section~\ref{sec:adaptive_control}; in this section, we primarily focus on performance of \textit{DivScore} when using random control sets.
We will refer to our method as \textit{DivScore-Random-Balanced}, when using random balanced control sets, and as \textit{DivScore-Random-Proportional}, when using random proportional control sets.}
In expectation, random proportional control sets will have a similar empirical marginal distribution of protected attribute types as the collection; correspondingly, we also report the disparity measure of the random proportional control set $d(T)$ as a baseline.
We will refer to this baseline as \textit{IID-Measure}.
Random proportional control sets need to be separately constructed for each new collection, while the same random balanced control set can be used for all collections; we discuss this contrast further in Section~\ref{sec:ppb_discussion}.

{We also implement a semi-supervised self-training algorithm as a baseline. This algorithm (described formally in Appendix~\ref{sec:impl_details}) iteratively labels the protected attribute of those elements in the dataset
for which similarity to one group in the control set is significantly larger than similarity to the other group. It then uses the learnt labels to compute the diversity score. We implement this baseline using random control sets and refer to it as \textit{SS-ST}.
\footnote{{We do not compare against crowd-annotation since the papers providing crowd-annotated datasets in our considered setting usually do not have ground truth available to estimate the approximation error.}}
}

\noindent
\paragraph{Similarity Metric.}
We construct feature vector representations for all images in the dataset using pre-trained deep image networks.
The feature extraction details are presented in Appendix~\ref{sec:impl_details}.
Given the feature vectors, we use the cosine similarity metric to compute pairwise similarity between images.
In particular, given feature vectors $x_1, x_2$ corresponding to any two images, we will define the similarity between the elements as 
\[\setlength{\abovedisplayskip}{3pt}
\setlength{\belowdisplayskip}{3pt} \sm(x_1, x_2) = 1 + \frac{x_1^\top x_2}{\norm{x_1}\norm{x_2}}. \eqlabel{2}\]
We add 1 to the standard cosine between two vectors to ensure that the similarity values are always non-negative.

\noindent
\paragraph{Evaluation Measures.}
We repeat the simulation 100 times; for each repetition, we construct a new split of the dataset and sample a new control set.
We report the true fraction $f$ and the mean (and standard error) of 
all metrics
across all repetitions.

\noindent
\paragraph{Results.}
The results are presented in Figure~\ref{fig:ppb_results_all} (the figure also plots the performance of \textit{DivScore-Adaptive}, that is discussed in Section~\ref{sec:adaptive_control}).
With respect to gender, Figure~\ref{fig:ppb_results_all}a shows that the \textit{DivScore} measure is always close to the true disparity measure for all collections, and the standard error of all metrics is quite low.
In this case, random control sets (balanced or proportional) can indeed approximate the disparity of all collections with very small error.

The results are more mixed when skintone is the protected attribute.
Figure~\ref{fig:ppb_results_all}b shows that while the \textit{DivScore} average is close to the true disparity measure, the standard errors are quite high.
The baselines \textit{IID-Measure} and \textit{SS-ST} have lower errors than our proxy measure (although they are not a feasible method for real-world applications, as discussed in the next section).
The poor performance for this protected attribute, when using random control sets, suggests that strategies to construct \textit{good} non-random control sets are necessary to reduce the approximation error.

\subsection{Discussion} \label{sec:ppb_discussion}
\noindent
\paragraph{Drawbacks of \textit{IID-Measure}.} Recall that \textit{IID-Measure} essentially uniformly samples a small subset of elements of the collection and reports the disparity of this small subset.
Figure~\ref{fig:ppb_results_all} shows that this baseline indeed performs well for PPB-dataset. However, it is not a cost-effective approach for real-world disparity audit applications.
The main drawback of this baseline is that the subset has to have i.i.d. elements from the collection being audited for it to accurately predict the disparity of the collection.
This implies that, for every new collection, we will have to re-sample and label a small subset to audit its diversity using \textit{IID-Measure}.
It is unreasonable to apply this approach when there are multiple collections (from the same domain) that need to be audited or when elements are continuously being added/removed from the collection.
The same reasoning limits the applicability of \textit{DivScore-Random-Proportional}.

\textit{DivScore-Random-Balanced}, on the other hand, addresses this drawback by using a generic labeled control set that can be used for any collection from the same domain, without additional overhead of constructing a new control set everytime.
This is also why balanced control sets should be preferred over imbalanced control sets, since a balanced control set will be more adept at handling collections with varying protected attribute marginal distributions. 

\noindent
\paragraph{Drawbacks of \textit{SS-ST}.}
{The semi-supervised learning baseline \textit{SS-ST} has larger estimation bias than \textit{DivScore-Random-Balanced} and \textit{DivScore-Random-Proportional}, but has lower approximation error than these methods.
However, the main drawback of this baseline is the time complexity. Since it iteratively labels elements and then adds them to control set to use for future iterations, the time complexity of this baseline is quadratic in dataset size.
In comparison, the time complexity of \textit{DivScore} is linear in the dataset size.}

\noindent
\paragraph{Dependence on $\gamma$.}
The performance of \textit{DivScore}
on PPB-dataset highlights the dependence of approximation error on the $\gamma$.
Since the gender and skintone labels of images in the dataset are available, we can empirically derive the $\gamma$ value for each protected attribute using the cosine similarity metric.
When gender is the protected attribute, $\gamma$ is around 0.35.
On the other hand, when skintone is the protected attribute, $\gamma$ is 0.08.
In other words, the cosine similarity metric is able to differentiate between images of men and women to a better extent than between images of dark-skinned and light-skinned people.
This difference in $\gamma$ is the reason for the relatively larger error of \textit{DivScore} in case of skintone protected attribute. 

\noindent
\paragraph{Cosine similarity metric.}
The simulations also show that measuring similarity between images using the cosine similarity metric over feature vectors from pre-trained networks is indeed a reasonable strategy for disparity measurement.
Pre-trained image networks and cosine similarity metric has similarly also been used in prior work for classification and clustering purposes \cite{nguyen2010cosine, xing2003distance}.
Intuitively, cosine similarity metric is effective when conditional distributions $p_0$ and $p_1$ are concentrated over separate clusters over the feature space;
e.g., for PPB-dataset and gender protected attribute, the high value of $\gamma$ (0.35) provides evidence of this phenomenon.
In this case, cosine similarity can, \textit{on average}, differentiate between elements from same cluster and different clusters.

\noindent
\paragraph{Dependence on $|T|$.}The size of control set is another factor which is inversely related to the error of the proxy disparity measure.
For this section, we use control sets of size 50. 
Smaller control sets lead to larger variance, as seen in Figure~\ref{fig:ppb_diff_control} in the Appendix, while using larger control sets might be inhibitory and expensive since, in a real-world applications, protected attributes of the control set images need to be hand-labeled or crowd-annotated.

Nevertheless, these empirical results highlight the crucial dependence on $\gamma$ and properties of the control set $T$.
In the next section, we 
improve upon the performance of our disparity measure 
and reduce the approximation error by designing non-random control sets that can better differentiate across the protected attribute types.

\begin{algorithm} 
\caption{Algorithm to construct an \textit{adaptive} control set}
\label{alg:control_construction}
\begin{flushleft}
    \textbf{Input:} Auxiliary set $U = U_0{\cup}U_1$, similarity metric $\sm$, $m$, $\alpha \geq 0$
\end{flushleft}
\begin{algorithmic}[1]
\State $T_0, T_1, \gamma_0, \gamma_1 \gets \emptyset $
\For{$i \in \set{0,1}$}
    \For{$x \in U_i$}
    \State { $\gamma_i^{(x)}{\gets}\frac{1}{|U_i|-1} \sum\limits_{y \in U_i \setminus \set{x}} \sm(x,y){-}\frac{1}{|U_{1-i}|} \sum\limits_{y \in U_{1-i}} \sm(x,y)$ }
    \EndFor
    \While{$|T_i| < m/2$}
        \State { $ T_i \gets T_i \cup \set{\arg \max\set{\gamma_i^{(x)} - \alpha \cdot \max_{y \in T_i} \sm(x,y) }_{x \in U_i \setminus T_i}}$}
    \EndWhile
\EndFor
\State \textbf{return} $T_0 \cup T_1$ 
\end{algorithmic}
\end{algorithm}
\raggedbottom

\section{Adaptive Control Sets} \label{sec:adaptive_control}
The theoretical analysis in Section~\ref{sec:theory} and the simulations in Section~\ref{sec:ppb_random_control} use random control sets; i.e., $T$ contains i.i.d. samples from $p_0$ and $p_1$ conditional distributions.
This choice was partly necessary because the error depends on the $\gamma$-value of the similarity metric, which is quantified as $\ms - \md$, where
\[\ms{=}\E_{x_0, x_1 \sim p_z}\left[ \sm(x_0, x_1) \right], \; \; \; \md{=}\E_{x_0 \sim p_{0}, x_1 \sim p_{1}}\left[ \sm(x_0, x_1) \right].\]
However, quantifying $\ms, \md$ (and, hence, $\gamma$) using expectation over the entire distribution might be unnecessary.

In particular, the theoretical analysis uses $\ms$ to quantify $\E_{x \sim p_i}\left[ \sm(x, T_i) \right]$, for any $i \in \set{0,1}$ (similarly $\md$).
Hence, we require the difference between $\ms$ and $\md$ to be large only when comparing the elements from the underlying distribution to the elements in the control set.
This simple insight provides us a way to choose \textit{good} control sets; i.e., we can choose control sets $T$ for which the difference $|\E_x\left[ \sm(x, T_i) \right] - \E_x\left[ \sm(x, T_{1-i}) \right]|$ is 
large.

\noindent
\paragraph{Control sets that maximize $\gamma$.}
Suppose we are given an auxiliary set $U$ of i.i.d. samples from $\pd$, such that the protected attributes of elements in $U$ are known. Let $U_0, U_1$ denote the partitions with respect to the protected attribute.
Once again, $U \ll |S|$ 
and $U$ will be used to construct a control set $T$.
Let $m{\in}\set{0, 2, 4, \dots, |U|}$ denote the desired size of $T$.
For each $i{\in}\set{0,1}$ and $y{\in}U_i$, we can first compute
\[\gamma_i^{(y)} := \E_{x\sim U_i\setminus\set{y}}\left[ \sm(x, y) \right] - \E_{x \sim U_{1-i}\setminus\set{y}}\left[ \sm(x, y) \right], \]
and then construct a control set $T$ by
adding $m/2$ elements from each $U_i$ with the largest values in the set $\set{\gamma_i^{(y)}}_{y \in U_i}$ to $T$.

\begin{figure*}[t]
    \centering
    \includegraphics[width=\linewidth]{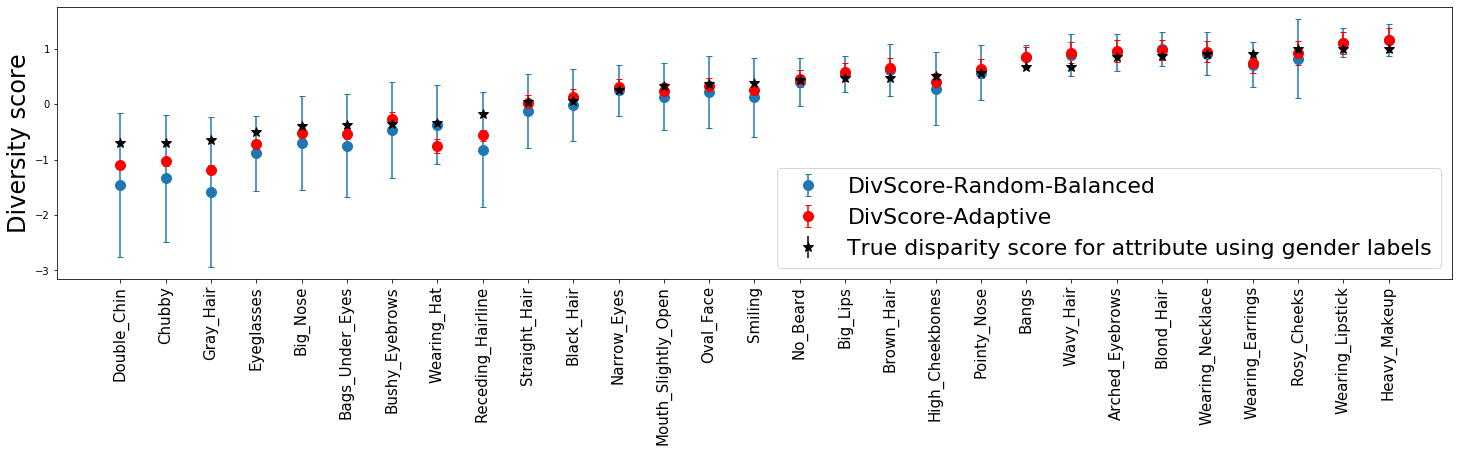}
    \caption{\small Results for CelebA dataset. For each feature, we plot the true gender disparity score for that feature as well as the scores obtained using \textit{DivScore-Random-Balanced} and \textit{DivScore-Adaptive} approaches. For both methods, the control set size is kept to 50. Note that the error of \textit{DivScore-Adaptive} is much smaller in this case.}
    \label{fig:celeba_results}
\end{figure*}
\raggedbottom

\noindent
\paragraph{Reducing redundancy in control sets.}
While the above methodology will result in control sets that maximize the difference between similarity with same group elements vs similarity with different group elements, it can also lead to \textit{redundancy} in the control set.
For instance, if two elements in $U$ are very similar to each other, they will large pairwise similarity and can, therefore, both have large $\gamma_i^{(y)}$ value 
; however, adding both to the control set is redundant.
Instead, we should aim to make the control set as \textit{diverse} and \textit{representative} of the underlying population as possible.
To that end, we employ a Maximal Marginal Relevance (MMR)-type approach and iteratively add elements from $U$ to the control set $T$.
For the first $m/2$ iterations, we add elements from $U_0$ to $T$.
Given a hyper-parameter $\alpha \geq 0$, at any iteration $t$, the element added to $T$ is the one that maximizes the following score:
$$ \set{ \gamma_0^{(y)} - \alpha  \cdot \max_{x \in T} \sm(x,y) }_{y \in U_0 \setminus T}. $$
The next $m/2$ iterations similarly adds elements from $U_1$ to $T$ using $\gamma_1^{(y)}$.
The quantity $\max_{x \in T} \sm(x,y)$ is the \textit{redundancy score} of $y$; i.e., the maximum similarity of $y$ with any element already added to $T$.
By penalizing an element for being very similar to an existing an element in $T$, we can ensure that chosen set $T$ is diverse.
The complete algorithm to construct such a control set, using a given $U$, is provided in Algorithm~\ref{alg:control_construction}.
We will refer to the control sets constructed using Algorithm~\ref{alg:control_construction} as \textit{adaptive} control sets and Algorithm~\ref{alg:main} with adaptive control sets as \textit{DivScore-Adaptive}.

Note that, even with this control set construction method, the theoretical analysis does not change.
Given any control set $T$ ($=T_0 \cup T_1$), let $\gamma^{(T)} := \E_{i} \left[ \E_{x \in p_i}\left[ \sm(x, T_i) \right] - \E_{x \sim p_{1-i}}\left[ \sm(x, T_{1-i}) \right] \right].$
For a control set $T$ with parameter $\gamma^{(T)}$, we can obtain the high probability bound in Theorem~\ref{thm:main} by simply replacing $\gamma$ by $\gamma^{(T)}$.
Infact, since we are explicitly choosing elements that have large $\gamma_i^{(\cdot)}$ parameters, 
$\gamma^{(T)}$ is expected to be larger than $\gamma$ and, hence, using the an adaptive control set will lead to a stronger bound in Theorem~\ref{thm:main}.

Our algorithm uses the standard MMR framework to reduce redundancy in the control set.
Importantly, prior work has shown that the greedy approach of selecting the \textit{best available} element is indeed approximately optimal \cite{carbonell1998use}.
Other non-redundancy approaches, e.g., Determinantal Point Processes \cite{kulesza2012determinantal}, can also be employed.

\noindent
\paragraph{Cost of each method.} \textit{DivScore-Adaptive} requires an auxiliary labeled set $U$ from which we extract a good control set. Since $|U| > |T|$, the cost (in terms of time and labeling required) of using \textit{DivScore-Adaptive} is slightly larger than the cost of using \textit{DivScore-Random-Balanced}, for which we just need to randomly sample $|T|$ elements to get a control set.
However, results in Appendix~\ref{sec:other_results} show that, to achieve similar approximation error, the required size of adaptive control sets is smaller than the size of random control sets.
Hence, even though adaptive control sets are more costly to construct, \textit{DivScore-Adaptive} is more cost-effective for disparity evaluations and requires smaller control sets (compared to \textit{DivScore-Random-Balanced}) to approximate with low error.

\section{Empirical Evaluation using adaptive control sets} \label{sec:adaptive_expts}

\subsection{PPB-2017}
Once again, we first test the performance of adaptive control sets on PPB-2017 dataset.
Recall that we split the dataset into two parts of size 200 and 1070 each.
Here, the first partition 
serves as the auxiliary set $U$ for Algorithm~\ref{alg:control_construction}.
The input hyper-parameter $\alpha$ is set to be 1.
The rest of the setup is the same as in Section~\ref{sec:ppb_random_control}.

\noindent
\paragraph{Results.} The results for this simulation are presented in Figure~\ref{fig:ppb_results_all} (in red).
The plots show that, using adaptive control sets, we obtain sharper proxy diversity measures for both gender and skintone.
For skintone protected attribute, the standard error of \textit{DivScore-Adaptive} is significantly lower than \textit{DivScore-Random-Balanced}.

Note that the average of \textit{DivScore-Adaptive}, across repetitions, does not align with the true disparity measure (unlike the results in case of random control sets).
This is because the {adaptive} control sets do not necessarily represent a uniformly random sample from the underlying conditional distributions.
Rather, they are the subset of images from $U$ with best scope of differentiating between images from different protected attributes types.
This non-random construction of the control sets leads to a possibly-biased but tighter approximation for the true disparity in the collection.

As noted before, when using adaptive control sets (from Algorithm~\ref{alg:control_construction}), the performance depends on the measure $ \gamma^{(T)} := \E_{i} \left[ \E_{x \in p_i}\left[ \sm(x, T_i) \right] - \E_{x \sim p_{1-i}}\left[ \sm(x, T_{1-i}) \right] \right].$
By construction, we want to choose control sets $T$ for which $\gamma^{(T)}$ is greater than the $\gamma$ value over the entire distribution.
Indeed, in case of PPB dataset and for every protected attribute, we observe that $\gamma^{(T)}$ values of the adaptive control sets are much larger than the corresponding value when of randomly chosen control sets.
When gender is the protected attribute, on average, $\gamma^{(T)}$ is 0.96 (for random control sets, it was 0.35).
Similarly, when skintone is the protected attribute, $\gamma^{(T)}$ is around 0.34 (for random control sets, it was 0.08).
The stark improvement in these values, compared to random control sets, is the reason behind the increased effectiveness of adaptive control sets in approximating the disparity of the collection.

\subsection{CelebA dataset}
CelebA dataset \cite{liu2015faceattributes} contains images of celebrities with tagged facial attributes, such as whether the person in the image has eyeglasses, mustache, etc., along with the gender of the person in the image\footnote{\url{mmlab.ie.cuhk.edu.hk/projects/CelebA.html}}.
We use 29 of these attributes and a random subset of around 20k images for our evaluation.
The goal is to approximate the disparity in the collection of images corresponding to a given facial attribute.

\noindent
\paragraph{Methodology.}
We evaluate the performance \textit{DivScore-Random-Balanced} and \textit{DivScore-Adaptive} for this dataset\footnote{\label{note1}{For CelebA and TwitterAAE datasets, we only report the performance of \textit{DivScore-Adaptive} and \textit{DivScore-Random-Balanced} to ensure that the plots are easily readable. 
The performance of \textit{DivScore-Random-Balanced} is similar to that of \textit{DivScore-Random-Balanced} and, due to large data collection sizes, \textit{SS-ST} is infeasible in this setting.}}.
We perform 25 repetitions; in each repetition, an auxiliary set $U$ is sampled of size 500 (and removed from the main dataset) and used to construct either a random control set (of size 50) or an adaptive control set (of size 50).
The chosen control set is kept to be the same for all attribute-specific collections in a repetition.
For each image, we use the pre-trained image networks to extract feature vectors (see Appendix~\ref{sec:impl_details} for details) and the cosine similarity metric - Eqn~$\eqref{2}$ - to compute pairwise similarity.

\noindent
\paragraph{Results.}
The results are presented in Figure~\ref{fig:celeba_results}.
The plot shows that, for almost all attributes, the score returned by \textit{DivScore-Adaptive} is close to the true disparity score and has smaller error than \textit{DivScore-Random-Balanced}.
Unlike the collections analyzed in PPB evaluation, the attribute-specific collections of CelebA dataset are non-random; i.e., they are not i.i.d. samples from the underlying distribution.
Nevertheless, \textit{DivScore-Adaptive} is able to approximate the true disparity for each attribute-specific collection quite accurately.

Note that, for these attribute-specific collections, implementing \textit{IID-Measure} would be very expensive, since one would have to sample a small set of elements for each attribute and label them.
In comparison, our approach uses the same control set for all attributes and, hence, is much more cost-effective.

\begin{figure}[t]
    \centering
    \includegraphics[width=0.6\linewidth]{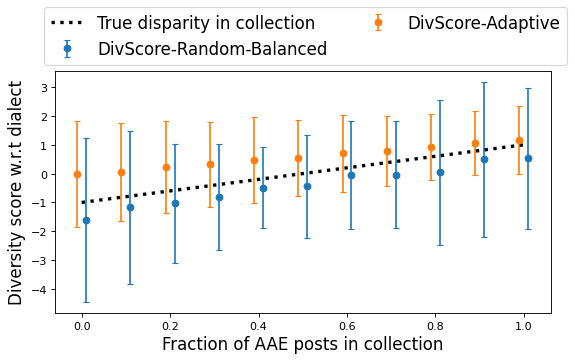}
    \caption{\small Results for TwittterAAE dataset with dialect as the protected attribute for \textit{DivScore-Random-Balanced} and \textit{DivScore-Adaptive} using control sets of size 50.}
    \label{fig:aae_results}
\end{figure}

\subsection{TwitterAAE dataset}

To show the effectiveness of \textit{DivScore} beyond image datasets, we analyze the performance on a dataset of Twitter posts.
The \textit{TwitterAAE} dataset, constructed by \citet{blodgett2016demographic}, contains around 60 million Twitter posts\footnote{\url{slanglab.cs.umass.edu/TwitterAAE/}}.
We filter the dataset to contain only posts which either are \textit{certainly} written in the African-American English (AAE) dialect (100k posts) or the White English dialect (WHE) (1.06 million posts). 
The details of filtering and feature extraction using a pre-trained Word2Vec model \cite{mikolov2015computing} are given in Appendix~\ref{sec:impl_details}.

\noindent
\paragraph{Methodology.}
For this dataset, we will evaluate the performance of \textit{DivScore-Random-Balanced} and \textit{DivScore-Adaptive}\footnoteref{note1}.
We partition the datasets into two parts: the first contains 200 posts and the second contains the rest.
The first partition is used to construct control sets of size 50 (randomly chosen from first partition for \textit{DivScore-Random-Balanced} and using Algorithm~\ref{alg:control_construction} for \textit{DivScore-Adaptive}).
The protected attribute is the dialect of the post.
The second partition is used for diversity audit evaluation.
We construct sub-datasets or collections with custom distribution of posts from each dialect.
For a given $f \in \set{0, 0.1, \dots, 1.0}$, we construct a sub-dataset $S$ of the second partition containing $f\cdot|S|$ AAE posts.
The overall size of the sampled collection is kept to be 1000 and we perform 25 repetitions.
For \textit{DivScore-Adaptive}, we use $\alpha=0.1$.

\noindent
\paragraph{Results.} The audit results for collections from TwitterAAE dataset are presented in Figure~\ref{fig:aae_results}.
The plot shows that both \textit{DivScore-Random-Balanced} and \textit{DivScore-Adaptive} can, on expectation, approximate the disparity for all collections; the disparity estimate from both methods increases with increasing fraction of AAE posts in the collection.
However, once again, the approximation error of \textit{DivScore-Adaptive} is smaller than the approximation error of \textit{DivScore-Random-Balanced} in most cases.

\section{Applications, Limitations \& Future Work}
In this section, we discuss the potential applications 
of our framework, some practical limitations and directions for future work.

\noindent
\paragraph{Third-party implementations and auditing summaries.}
To audit the diversity of any collection, \textit{DivScore} simply requires access to a small labeled control set and a similarity metric.
The cost of constructing these components is relatively small (compared to labeling the entire collection) and, hence, our audit framework can be potentially employed by third-party agencies that audit independently of the organization owning/providing the collections.
For instance, our algorithm can be implemented as a browser plugin to audit the gender diversity of Google Image results or the dialect diversity of Twitter search results.
Such a domain-generic diversity audit mechanism
can be used to ensure a more-balanced power dynamic between the organizations disseminating/controlling the data and the users of the applications that use this data.

\noindent
\paragraph{Variable-sized collections.} \textit{DivScore} can easily adapt to updates to the collections being audited.
If an element is added/removed, one simply needs to add/remove the contribution of this element from $\sm(S,T_0)$ and $\sm(S,T_1)$, and recompute $\hat{d}(S)$.
This feature crucially addresses the main drawback of \textit{IID-Measure}.

\noindent
\paragraph{Possibility of stereotype exaggeration.}
In our simulations, we evaluate the gender diversity using ``male'' vs ``female'' partition and skintone diversity using the Fitzpatrick scale.
Pre-defined protected attribute partitions, however, can be problematic; e.g., commercial AI tools' inability in handling non-binary gender \cite{scheuerman2019computers}.

Considering that 
Our algorithm is based on choosing control sets that can differentiate across protected attribute types, there is a possibility that the automatically constructed control sets can be stereotypically-biased.
For example, a control set with high $\gamma^{(T)}$ value for gender may just include images of men and women, and exclude images of transgender individuals.
While \textit{non-redundancy} aims to ensure that the control set is diverse, it does not guarantee that the control set will be perfectly representative.
Given this possibility, we strongly encourage the additional hand-curation of automatically-constructed control sets.
Further, any agency using control sets should make them public and elicit community feedback to avoid representational biases.
Recent work on designs for such cooperative frameworks can be employed for this purpose \cite{muller2020interrogating,finnegan2015counter}.

\noindent
\paragraph{Choice of $\alpha$.} For \textit{DivScore-Adaptive}, $\alpha$ is the parameter that controls the redundancy of the control set.
It primarily depends on the domain in consideration and we use fixed $\alpha$ for collections from the same domain. 
However, the mechanism to choose the best $\alpha$ for a given domain is unclear and can be further explored.

\noindent
\paragraph{Improving theoretical bounds.} While the theoretical bounds provide intuition about dependence of error on size of control set and $\gamma$, the constants in the bounds can be further improved.
E.g., in case of PPB dataset with gender protected attribute and the empirical setup in Section~\ref{sec:ppb_random_control}, Theorem~\ref{thm:main} suggests that error $|\delta| \leq 5$; however, we observe that the error is much smaller ($\leq 0.5$) in practice.
Improved and tighter analysis can help reduce the difference between theoretical and empirical performance.

\noindent
\paragraph{Assessing qualitative disparities.}
Our approach is more cost-effective than crowd annotation.
However, crowd-annotation can help answer questions about the collection beyond disparity quantification.
For example, \citet{kay2015unequal} use crowd-annotation to provide evidence of sexualized depictions of women in Google Image results for  certain occupations such as construction worker.
As part of future work, one can explore extensions of our approach or control sets that can assess such qualitative disparities as well.

\section{Conclusion}
We propose a method, \textit{DivScore}, to audit the diversity of any given collection using a small control set and an algorithm to construct adaptive control sets.
Theoretical analysis shows that \textit{DivScore} approximates the disparity of the collection, given appropriate control sets and similarity metrics.
Empirical evaluations 
demonstrate that \textit{DivScore} can handle collections from both image and text domains\footnote{Code available at \url{https://github.com/vijaykeswani/Diversity-Audit-Using-Representative-Examples}.}.

\section*{Acknowledgements}
This research was supported in part by a J.P. Morgan Faculty Award. We would like to thank Nisheeth K. Vishnoi for discussions on this problem.

\newpage
\bibliographystyle{ACM-Reference-Format}
\bibliography{references}


\begin{thebibliography}{37}


\ifx \showCODEN    \undefined \def \showCODEN     #1{\unskip}     \fi
\ifx \showDOI      \undefined \def \showDOI       #1{#1}\fi
\ifx \showISBNx    \undefined \def \showISBNx     #1{\unskip}     \fi
\ifx \showISBNxiii \undefined \def \showISBNxiii  #1{\unskip}     \fi
\ifx \showISSN     \undefined \def \showISSN      #1{\unskip}     \fi
\ifx \showLCCN     \undefined \def \showLCCN      #1{\unskip}     \fi
\ifx \shownote     \undefined \def \shownote      #1{#1}          \fi
\ifx \showarticletitle \undefined \def \showarticletitle #1{#1}   \fi
\ifx \showURL      \undefined \def \showURL       {\relax}        \fi
\providecommand\bibfield[2]{#2}
\providecommand\bibinfo[2]{#2}
\providecommand\natexlab[1]{#1}
\providecommand\showeprint[2][]{arXiv:#2}

\bibitem[\protect\citeauthoryear{Bair}{Bair}{2013}]%
        {bair2013semi}
\bibfield{author}{\bibinfo{person}{Eric Bair}.}
  \bibinfo{year}{2013}\natexlab{}.
\newblock \showarticletitle{Semi-supervised clustering methods}.
\newblock \bibinfo{journal}{\emph{Wiley Interdisciplinary Reviews:
  Computational Statistics}} \bibinfo{volume}{5}, \bibinfo{number}{5}
  (\bibinfo{year}{2013}), \bibinfo{pages}{349--361}.
\newblock


\bibitem[\protect\citeauthoryear{Blodgett, Green, and O’Connor}{Blodgett
  et~al\mbox{.}}{2016}]%
        {blodgett2016demographic}
\bibfield{author}{\bibinfo{person}{Su~Lin Blodgett}, \bibinfo{person}{Lisa
  Green}, {and} \bibinfo{person}{Brendan O’Connor}.}
  \bibinfo{year}{2016}\natexlab{}.
\newblock \showarticletitle{Demographic Dialectal Variation in Social Media: A
  Case Study of African-American English}. In
  \bibinfo{booktitle}{\emph{Proceedings of Conference on Empirical Methods in
  Natural Language Processing}}.
\newblock


\bibitem[\protect\citeauthoryear{Bolukbasi, Chang, Zou, Saligrama, and
  Kalai}{Bolukbasi et~al\mbox{.}}{2016}]%
        {bolukbasi2016man}
\bibfield{author}{\bibinfo{person}{Tolga Bolukbasi}, \bibinfo{person}{Kai-Wei
  Chang}, \bibinfo{person}{James~Y Zou}, \bibinfo{person}{Venkatesh Saligrama},
  {and} \bibinfo{person}{Adam~T Kalai}.} \bibinfo{year}{2016}\natexlab{}.
\newblock \showarticletitle{Man is to computer programmer as woman is to
  homemaker? debiasing word embeddings}. In \bibinfo{booktitle}{\emph{Advances
  in Neural Information Processing Systems}}.
\newblock


\bibitem[\protect\citeauthoryear{Buolamwini and Gebru}{Buolamwini and
  Gebru}{2018}]%
        {buolamwini2018gender}
\bibfield{author}{\bibinfo{person}{Joy Buolamwini} {and}
  \bibinfo{person}{Timnit Gebru}.} \bibinfo{year}{2018}\natexlab{}.
\newblock \showarticletitle{Gender shades: Intersectional accuracy disparities
  in commercial gender classification}. In \bibinfo{booktitle}{\emph{FAT*
  2018}}. \bibinfo{pages}{77--91}.
\newblock


\bibitem[\protect\citeauthoryear{Caliskan, Bryson, and Narayanan}{Caliskan
  et~al\mbox{.}}{2017}]%
        {caliskan2017semantics}
\bibfield{author}{\bibinfo{person}{Aylin Caliskan}, \bibinfo{person}{Joanna~J
  Bryson}, {and} \bibinfo{person}{Arvind Narayanan}.}
  \bibinfo{year}{2017}\natexlab{}.
\newblock \showarticletitle{Semantics derived automatically from language
  corpora contain human-like biases}.
\newblock \bibinfo{journal}{\emph{Science}} (\bibinfo{year}{2017}).
\newblock


\bibitem[\protect\citeauthoryear{Carbonell and Goldstein}{Carbonell and
  Goldstein}{1998}]%
        {carbonell1998use}
\bibfield{author}{\bibinfo{person}{Jaime~G Carbonell} {and}
  \bibinfo{person}{Jade Goldstein}.} \bibinfo{year}{1998}\natexlab{}.
\newblock \showarticletitle{The use of MMR, diversity-based reranking for
  reordering documents and producing summaries.}. In
  \bibinfo{booktitle}{\emph{SIGIR}}, Vol.~\bibinfo{volume}{98}.
\newblock


\bibitem[\protect\citeauthoryear{Celis and Keswani}{Celis and Keswani}{2020}]%
        {celis2020implicit}
\bibfield{author}{\bibinfo{person}{L~Elisa Celis} {and} \bibinfo{person}{Vijay
  Keswani}.} \bibinfo{year}{2020}\natexlab{}.
\newblock \showarticletitle{Implicit Diversity in Image Summarization}.
\newblock \bibinfo{journal}{\emph{Proceedings of the ACM on Human-Computer
  Interaction}} \bibinfo{volume}{4}, \bibinfo{number}{CSCW2}
  (\bibinfo{year}{2020}), \bibinfo{pages}{1--28}.
\newblock


\bibitem[\protect\citeauthoryear{Choi, Grover, Singh, Shu, and Ermon}{Choi
  et~al\mbox{.}}{2020}]%
        {choi2020fair}
\bibfield{author}{\bibinfo{person}{Kristy Choi}, \bibinfo{person}{Aditya
  Grover}, \bibinfo{person}{Trisha Singh}, \bibinfo{person}{Rui Shu}, {and}
  \bibinfo{person}{Stefano Ermon}.} \bibinfo{year}{2020}\natexlab{}.
\newblock \showarticletitle{Fair generative modeling via weak supervision}. In
  \bibinfo{booktitle}{\emph{ICML}}. PMLR.
\newblock


\bibitem[\protect\citeauthoryear{Collins}{Collins}{2002}]%
        {collins2002black}
\bibfield{author}{\bibinfo{person}{Patricia~Hill Collins}.}
  \bibinfo{year}{2002}\natexlab{}.
\newblock \bibinfo{booktitle}{\emph{Black feminist thought: Knowledge,
  consciousness, and the politics of empowerment}}.
\newblock \bibinfo{publisher}{routledge}.
\newblock


\bibitem[\protect\citeauthoryear{Davidson, Bhattacharya, and Weber}{Davidson
  et~al\mbox{.}}{2019}]%
        {davidson2019racial}
\bibfield{author}{\bibinfo{person}{Thomas Davidson}, \bibinfo{person}{Debasmita
  Bhattacharya}, {and} \bibinfo{person}{Ingmar Weber}.}
  \bibinfo{year}{2019}\natexlab{}.
\newblock \showarticletitle{Racial Bias in Hate Speech and Abusive Language
  Detection Datasets}. In \bibinfo{booktitle}{\emph{Proceedings of the Third
  Workshop on Abusive Language Online}}. \bibinfo{pages}{25--35}.
\newblock


\bibitem[\protect\citeauthoryear{Deng, Dong, Socher, Li, Li, and Fei-Fei}{Deng
  et~al\mbox{.}}{2009}]%
        {imagenet}
\bibfield{author}{\bibinfo{person}{Jia Deng}, \bibinfo{person}{Wei Dong},
  \bibinfo{person}{Richard Socher}, \bibinfo{person}{Li-Jia Li},
  \bibinfo{person}{Kai Li}, {and} \bibinfo{person}{Li Fei-Fei}.}
  \bibinfo{year}{2009}\natexlab{}.
\newblock \showarticletitle{Imagenet: A large-scale hierarchical image
  database}. In \bibinfo{booktitle}{\emph{2009 IEEE conference on computer
  vision and pattern recognition}}. Ieee, \bibinfo{pages}{248--255}.
\newblock


\bibitem[\protect\citeauthoryear{Finnegan, Oakhill, and Garnham}{Finnegan
  et~al\mbox{.}}{2015}]%
        {finnegan2015counter}
\bibfield{author}{\bibinfo{person}{Eimear Finnegan}, \bibinfo{person}{Jane
  Oakhill}, {and} \bibinfo{person}{Alan Garnham}.}
  \bibinfo{year}{2015}\natexlab{}.
\newblock \showarticletitle{Counter-stereotypical pictures as a strategy for
  overcoming spontaneous gender stereotypes}.
\newblock \bibinfo{journal}{\emph{Frontiers in psychology}}
  \bibinfo{volume}{6} (\bibinfo{year}{2015}), \bibinfo{pages}{1291}.
\newblock


\bibitem[\protect\citeauthoryear{Fitzpatrick}{Fitzpatrick}{2008}]%
        {fitzpatrick2008fitzpatrick}
\bibfield{author}{\bibinfo{person}{T Fitzpatrick}.}
  \bibinfo{year}{2008}\natexlab{}.
\newblock \showarticletitle{Fitzpatrick Skin Type Classification Scale}.
\newblock \bibinfo{journal}{\emph{Skin Inc}} (\bibinfo{year}{2008}).
\newblock


\bibitem[\protect\citeauthoryear{Gerbner, Gross, Morgan, and
  Signorielli}{Gerbner et~al\mbox{.}}{1986}]%
        {gerbner1986living}
\bibfield{author}{\bibinfo{person}{George Gerbner}, \bibinfo{person}{Larry
  Gross}, \bibinfo{person}{Michael Morgan}, {and} \bibinfo{person}{Nancy
  Signorielli}.} \bibinfo{year}{1986}\natexlab{}.
\newblock \showarticletitle{Living with television: The dynamics of the
  cultivation process}.
\newblock \bibinfo{journal}{\emph{Perspectives on media effects}}
  \bibinfo{volume}{1986} (\bibinfo{year}{1986}), \bibinfo{pages}{17--40}.
\newblock


\bibitem[\protect\citeauthoryear{Godin}{Godin}{2019}]%
        {godin2019improving}
\bibfield{author}{\bibinfo{person}{Fr{\'e}deric Godin}.}
  \bibinfo{year}{2019}\natexlab{}.
\newblock \showarticletitle{Improving and Interpreting Neural Networks for
  Word-Level Prediction Tasks in Natural Language Processing}.
\newblock \bibinfo{journal}{\emph{Ghent University}} (\bibinfo{year}{2019}).
\newblock


\bibitem[\protect\citeauthoryear{Harris}{Harris}{1982}]%
        {harris1982mammies}
\bibfield{author}{\bibinfo{person}{Trudier Harris}.}
  \bibinfo{year}{1982}\natexlab{}.
\newblock \bibinfo{booktitle}{\emph{From mammies to militants: Domestics in
  black American literature}}.
\newblock \bibinfo{publisher}{Temple University Press}.
\newblock


\bibitem[\protect\citeauthoryear{Hoeffding}{Hoeffding}{1994}]%
        {hoeffding1994probability}
\bibfield{author}{\bibinfo{person}{Wassily Hoeffding}.}
  \bibinfo{year}{1994}\natexlab{}.
\newblock \showarticletitle{Probability inequalities for sums of bounded random
  variables}.
\newblock In \bibinfo{booktitle}{\emph{The Collected Works of Wassily
  Hoeffding}}. \bibinfo{publisher}{Springer}, \bibinfo{pages}{409--426}.
\newblock


\bibitem[\protect\citeauthoryear{Kallus, Mao, and Zhou}{Kallus
  et~al\mbox{.}}{2020}]%
        {kallus2020assessing}
\bibfield{author}{\bibinfo{person}{Nathan Kallus}, \bibinfo{person}{Xiaojie
  Mao}, {and} \bibinfo{person}{Angela Zhou}.} \bibinfo{year}{2020}\natexlab{}.
\newblock \showarticletitle{Assessing algorithmic fairness with unobserved
  protected class using data combination}. In
  \bibinfo{booktitle}{\emph{Proceedings of the 2020 Conference on Fairness,
  Accountability, and Transparency}}. \bibinfo{pages}{110--110}.
\newblock


\bibitem[\protect\citeauthoryear{Kay, Matuszek, and Munson}{Kay
  et~al\mbox{.}}{2015}]%
        {kay2015unequal}
\bibfield{author}{\bibinfo{person}{Matthew Kay}, \bibinfo{person}{Cynthia
  Matuszek}, {and} \bibinfo{person}{Sean~A Munson}.}
  \bibinfo{year}{2015}\natexlab{}.
\newblock \showarticletitle{Unequal representation and gender stereotypes in
  image search results for occupations}. In
  \bibinfo{booktitle}{\emph{Proceedings of ACM Conference on Human Factors in
  Computing Systems}}.
\newblock


\bibitem[\protect\citeauthoryear{Keswani and Celis}{Keswani and Celis}{2021}]%
        {celis2021dialect}
\bibfield{author}{\bibinfo{person}{Vijay Keswani} {and}
  \bibinfo{person}{L~Elisa Celis}.} \bibinfo{year}{2021}\natexlab{}.
\newblock \showarticletitle{Dialect Diversity in Text Summarization on
  Twitter}.
\newblock \bibinfo{journal}{\emph{The Web Conference'2021}}
  (\bibinfo{year}{2021}).
\newblock


\bibitem[\protect\citeauthoryear{Kulesza, Taskar, et~al\mbox{.}}{Kulesza
  et~al\mbox{.}}{2012}]%
        {kulesza2012determinantal}
\bibfield{author}{\bibinfo{person}{Alex Kulesza}, \bibinfo{person}{Ben Taskar},
  {et~al\mbox{.}}} \bibinfo{year}{2012}\natexlab{}.
\newblock \showarticletitle{Determinantal point processes for machine
  learning}.
\newblock \bibinfo{journal}{\emph{Foundations and Trends{\textregistered} in
  Machine Learning}} \bibinfo{volume}{5}, \bibinfo{number}{2--3}
  (\bibinfo{year}{2012}), \bibinfo{pages}{123--286}.
\newblock


\bibitem[\protect\citeauthoryear{Larrazabal, Nieto, Peterson, Milone, and
  Ferrante}{Larrazabal et~al\mbox{.}}{2020}]%
        {larrazabal2020gender}
\bibfield{author}{\bibinfo{person}{Agostina~J Larrazabal},
  \bibinfo{person}{Nicol{\'a}s Nieto}, \bibinfo{person}{Victoria Peterson},
  \bibinfo{person}{Diego~H Milone}, {and} \bibinfo{person}{Enzo Ferrante}.}
  \bibinfo{year}{2020}\natexlab{}.
\newblock \showarticletitle{Gender imbalance in medical imaging datasets
  produces biased classifiers for computer-aided diagnosis}.
\newblock \bibinfo{journal}{\emph{Proceedings of the National Academy of
  Sciences}} \bibinfo{volume}{117}, \bibinfo{number}{23}
  (\bibinfo{year}{2020}), \bibinfo{pages}{12592--12594}.
\newblock


\bibitem[\protect\citeauthoryear{Liu, Luo, Wang, and Tang}{Liu
  et~al\mbox{.}}{2015}]%
        {liu2015faceattributes}
\bibfield{author}{\bibinfo{person}{Ziwei Liu}, \bibinfo{person}{Ping Luo},
  \bibinfo{person}{Xiaogang Wang}, {and} \bibinfo{person}{Xiaoou Tang}.}
  \bibinfo{year}{2015}\natexlab{}.
\newblock \showarticletitle{Deep Learning Face Attributes in the Wild}. In
  \bibinfo{booktitle}{\emph{ICCV 2015}}.
\newblock


\bibitem[\protect\citeauthoryear{Mikolov, Chen, Corrado, and Dean}{Mikolov
  et~al\mbox{.}}{2015}]%
        {mikolov2015computing}
\bibfield{author}{\bibinfo{person}{Tomas Mikolov}, \bibinfo{person}{Kai Chen},
  \bibinfo{person}{Gregory~S Corrado}, {and} \bibinfo{person}{Jeffrey~A Dean}.}
  \bibinfo{year}{2015}\natexlab{}.
\newblock \bibinfo{title}{Computing numeric representations of words in a
  high-dimensional space}.
\newblock
\newblock
\newblock
\shownote{US Patent 9,037,464.}


\bibitem[\protect\citeauthoryear{Mitzenmacher and Upfal}{Mitzenmacher and
  Upfal}{2017}]%
        {mitzenmacher2017probability}
\bibfield{author}{\bibinfo{person}{Michael Mitzenmacher} {and}
  \bibinfo{person}{Eli Upfal}.} \bibinfo{year}{2017}\natexlab{}.
\newblock \bibinfo{booktitle}{\emph{Probability and computing: Randomization
  and probabilistic techniques in algorithms and data analysis}}.
\newblock


\bibitem[\protect\citeauthoryear{Muller, Aragon, Guha, Kogan, Neff, Seidelin,
  Shilton, and Tanweer}{Muller et~al\mbox{.}}{2020}]%
        {muller2020interrogating}
\bibfield{author}{\bibinfo{person}{Michael Muller}, \bibinfo{person}{Cecilia
  Aragon}, \bibinfo{person}{Shion Guha}, \bibinfo{person}{Marina Kogan},
  \bibinfo{person}{Gina Neff}, \bibinfo{person}{Cathrine Seidelin},
  \bibinfo{person}{Katie Shilton}, {and} \bibinfo{person}{Anissa Tanweer}.}
  \bibinfo{year}{2020}\natexlab{}.
\newblock \showarticletitle{Interrogating Data Science}. In
  \bibinfo{booktitle}{\emph{Conference Companion Publication of the 2020 on
  CSCW}}. \bibinfo{pages}{467--473}.
\newblock


\bibitem[\protect\citeauthoryear{Nguyen and Bai}{Nguyen and Bai}{2010}]%
        {nguyen2010cosine}
\bibfield{author}{\bibinfo{person}{Hieu~V Nguyen} {and} \bibinfo{person}{Li
  Bai}.} \bibinfo{year}{2010}\natexlab{}.
\newblock \showarticletitle{Cosine similarity metric learning for face
  verification}. In \bibinfo{booktitle}{\emph{Asian conference on computer
  vision}}. Springer, \bibinfo{pages}{709--720}.
\newblock


\bibitem[\protect\citeauthoryear{Noble}{Noble}{2018}]%
        {noble2018algorithms}
\bibfield{author}{\bibinfo{person}{Safiya~Umoja Noble}.}
  \bibinfo{year}{2018}\natexlab{}.
\newblock \bibinfo{booktitle}{\emph{Algorithms of oppression: How search
  engines reinforce racism}}.
\newblock \bibinfo{publisher}{nyu Press}.
\newblock


\bibitem[\protect\citeauthoryear{O'neil}{O'neil}{2016}]%
        {o2016weapons}
\bibfield{author}{\bibinfo{person}{Cathy O'neil}.}
  \bibinfo{year}{2016}\natexlab{}.
\newblock \bibinfo{booktitle}{\emph{Weapons of math destruction: How big data
  increases inequality and threatens democracy}}.
\newblock \bibinfo{publisher}{Broadway Books}.
\newblock


\bibitem[\protect\citeauthoryear{Scheuerman, Paul, and Brubaker}{Scheuerman
  et~al\mbox{.}}{2019}]%
        {scheuerman2019computers}
\bibfield{author}{\bibinfo{person}{Morgan~Klaus Scheuerman},
  \bibinfo{person}{Jacob~M Paul}, {and} \bibinfo{person}{Jed~R Brubaker}.}
  \bibinfo{year}{2019}\natexlab{}.
\newblock \showarticletitle{How Computers See Gender: An Evaluation of Gender
  Classification in Commercial Facial Analysis Services}.
\newblock \bibinfo{journal}{\emph{Proceedings of ACM Human-Computer
  Interaction}} (\bibinfo{year}{2019}).
\newblock


\bibitem[\protect\citeauthoryear{Schwemmer, Knight, Bello-Pardo, Oklobdzija,
  Schoonvelde, and Lockhart}{Schwemmer et~al\mbox{.}}{2020}]%
        {schwemmer2020diagnosing}
\bibfield{author}{\bibinfo{person}{Carsten Schwemmer}, \bibinfo{person}{Carly
  Knight}, \bibinfo{person}{Emily~D Bello-Pardo}, \bibinfo{person}{Stan
  Oklobdzija}, \bibinfo{person}{Martijn Schoonvelde}, {and}
  \bibinfo{person}{Jeffrey~W Lockhart}.} \bibinfo{year}{2020}\natexlab{}.
\newblock \showarticletitle{Diagnosing gender bias in image recognition
  systems}.
\newblock \bibinfo{journal}{\emph{Socius}}  \bibinfo{volume}{6}
  (\bibinfo{year}{2020}), \bibinfo{pages}{2378023120967171}.
\newblock


\bibitem[\protect\citeauthoryear{Shrum}{Shrum}{1995}]%
        {shrum1995assessing}
\bibfield{author}{\bibinfo{person}{Larry~J Shrum}.}
  \bibinfo{year}{1995}\natexlab{}.
\newblock \showarticletitle{Assessing the social influence of television: A
  social cognition perspective on cultivation effects}.
\newblock \bibinfo{journal}{\emph{Communication Research}}
  \bibinfo{volume}{22}, \bibinfo{number}{4} (\bibinfo{year}{1995}).
\newblock


\bibitem[\protect\citeauthoryear{Simonyan and Zisserman}{Simonyan and
  Zisserman}{2014}]%
        {simonyan2014very}
\bibfield{author}{\bibinfo{person}{Karen Simonyan} {and}
  \bibinfo{person}{Andrew Zisserman}.} \bibinfo{year}{2014}\natexlab{}.
\newblock \showarticletitle{Very deep convolutional networks for large-scale
  image recognition}.
\newblock \bibinfo{journal}{\emph{arXiv preprint arXiv:1409.1556}}
  (\bibinfo{year}{2014}).
\newblock


\bibitem[\protect\citeauthoryear{Singh, Chayko, Inamdar, and Floegel}{Singh
  et~al\mbox{.}}{2020}]%
        {singh2019female}
\bibfield{author}{\bibinfo{person}{Vivek~K Singh}, \bibinfo{person}{Mary
  Chayko}, \bibinfo{person}{Raj Inamdar}, {and} \bibinfo{person}{Diana
  Floegel}.} \bibinfo{year}{2020}\natexlab{}.
\newblock \showarticletitle{Female Librarians and Male Computer Programmers?
  Gender Bias in Occupational Images on Digital Media Platforms}.
\newblock \bibinfo{journal}{\emph{Journal of the Association for Information
  Science and Technology}} (\bibinfo{year}{2020}).
\newblock


\bibitem[\protect\citeauthoryear{Word, Zanna, and Cooper}{Word
  et~al\mbox{.}}{1974}]%
        {word1974nonverbal}
\bibfield{author}{\bibinfo{person}{Carl~O Word}, \bibinfo{person}{Mark~P
  Zanna}, {and} \bibinfo{person}{Joel Cooper}.}
  \bibinfo{year}{1974}\natexlab{}.
\newblock \showarticletitle{The nonverbal mediation of self-fulfilling
  prophecies in interracial interaction}.
\newblock \bibinfo{journal}{\emph{Journal of experimental social psychology}}
  \bibinfo{volume}{10}, \bibinfo{number}{2} (\bibinfo{year}{1974}),
  \bibinfo{pages}{109--120}.
\newblock


\bibitem[\protect\citeauthoryear{Xing, Jordan, Russell, and Ng}{Xing
  et~al\mbox{.}}{2003}]%
        {xing2003distance}
\bibfield{author}{\bibinfo{person}{Eric~P Xing}, \bibinfo{person}{Michael~I
  Jordan}, \bibinfo{person}{Stuart~J Russell}, {and} \bibinfo{person}{Andrew~Y
  Ng}.} \bibinfo{year}{2003}\natexlab{}.
\newblock \showarticletitle{Distance metric learning with application to
  clustering with side-information}. In \bibinfo{booktitle}{\emph{NIPS}}.
\newblock


\bibitem[\protect\citeauthoryear{Zhu and Goldberg}{Zhu and Goldberg}{2009}]%
        {zhu2009introduction}
\bibfield{author}{\bibinfo{person}{Xiaojin Zhu} {and} \bibinfo{person}{Andrew~B
  Goldberg}.} \bibinfo{year}{2009}\natexlab{}.
\newblock \showarticletitle{Introduction to semi-supervised learning}.
\newblock \bibinfo{journal}{\emph{Synthesis lectures on artificial intelligence
  and machine learning}} (\bibinfo{year}{2009}).
\newblock


\end{thebibliography}

\newpage
\appendix

\section{Proofs}
\subsection{Proof of Lemma~\ref{lem:single_elem}} \label{sec:proofs}
\begin{proof}[Proof of Lemma~\ref{lem:single_elem}]
Suppose $x$ has protected attribute type 0, i.e., $x \in S_0$. 
Since control set $T$ is domain-relevant, we know that for any $y \in T_0$, $\E[\sm(x,y)] = \ms$ and for any $y \in T_1$, $\E[\sm(x,y)] = \md$.
Then, using Chernoff-Hoeffding bounds \cite{hoeffding1994probability,mitzenmacher2017probability}, we get that for any $\delta > 0$,
\[\P\left[ \sm(x, T_0) \notin (1 \pm \delta)\ms \right] \leq 2\exp(-\delta^2 \cdot |T_0| \cdot \ms/3), \textrm{ and }\]
\[\P\left[ \sm(x, T_1) \notin (1 \pm \delta)\md \right] \leq  2\exp(-\delta^2 \cdot |T_1| \cdot \md/3).\]
Note that $|T_0| = |T_1| = |T|/2$. The probability that both the above events are simultaneously satisfied is
\begin{align*}
2 \exp(-\delta^2 &  \ms|T| /6) + 2\exp(-\delta^2  \md|T| /6) \\ &\leq 2\exp(-\delta^2 \md |T| /6) \cdot \left(1 + \exp(-\delta^2 \gamma |T| /6) \right).    
\end{align*}

\noindent
Therefore, combining the two statements we get that with probability atleast $1 - 2\exp(-\delta^2 \md |T| /6) \cdot \left(1 + \exp(-\delta^2 \gamma |T| /6) \right)$,
\begin{align*}
&\sm(x, T_0) - \sm(x,T_1)\\ &\in \left[ (1-\delta)\ms - (1+\delta)\md, (1+\delta)\ms - (1-\delta)\md \right]).
\end{align*}
Simplifying the above expression, we get
\[\sm(x, T_i) - \sm(x,T_{1-i}) \in  \ms - \md \pm \delta (\ms + \md).\]
The other direction (when $x \in S_1$) follows from symmetry.
\end{proof}

\subsection{Proof of Theorem~\ref{thm:main}} \label{sec:proofs_2}
\begin{proof}[Proof of Theorem~\ref{thm:main}]
%
Applying Lemma~\ref{lem:single_elem} to each element in $S$, we get that with probability atleast $q := 1 -  2|S|e^{-\delta^2 \md |T| /6} \cdot (1 + e^{-\delta^2 \gamma |T| /6} )$, all elements satisfy condition \eqref{1}.
Summing up $\sm(x, T_0) - \sm(x,T_1)$ for all $x \in S$, we get
\begin{align*}
\sm(S, T_0) - & \sm(S,T_1)\\  &\in (\ms - \md) \cdot \left(|S_0 |- |S_1|\right)/|S| \pm \delta (\ms + \md).    
\end{align*}
Simplifying the above bound, we have that with probability $q$,
\[\hat{d}(S) \in(\ms - \md) \cdot  d(S) \pm \delta \cdot (\ms + \md).\]
%
%

%
\noindent
By choosing $\delta = \sqrt{\frac{6\log (20|S| )}{ |T| \min(\md, \lambda)}}$, the probability $q$ is atleast 
\begin{align*}
1 - & 2|S|e^{-\delta^2 \md |T| /6} (1 + e^{-\delta^2 \gamma |T| /6} ) \\ &\geq 1 - 2|S| e^{-\log 20|S|}  (1 +  e^{-\log 20|S|})= 0.9- \frac{1}{200|S|}.    
\end{align*}
\end{proof}

\section{Implementation Details} \label{sec:impl_details}

\noindent
\paragraph{Details of \textit{SS-ST} baseline.} \label{sec:ss_st}
The complete implementation of the semi-supervised self-training baseline \textit{SS-ST} is given in Algorithm~\ref{alg:ss_st}.
We use $k=5$ for PPB-2017 simulations.

\begin{algorithm}[!htbp]
\caption{\textit{SS-ST} baseline}
\label{alg:ss_st}
\begin{flushleft}
    \textbf{Input:} Dataset $S$, control set $T := T_0 \cup T_1$, $\sm(\cdot, \cdot)$, $k \in \mathbb{Z}_{>0}$
\end{flushleft}
\begin{algorithmic}[1]
\State $n_0, n_1 \gets 0$
\While {$S \neq \emptyset$} 
    \For{$x \in S$}
        \State $s(x) \gets \frac{1}{|T_0|} \sum_{y \in T_0} \sm(x,y) - \frac{1}{|T_1|} \sum_{y \in T_1} \sm(x,y)$
    \EndFor
    \State $\tilde{T} \gets $ top $k$ elements in set $\set{|s(x)|}_{x \in S}$
    \State $n_0 \gets n_0 + |\set{s(x) \mid x \in \tilde{T}, s(x) > 0}|$
    \State $n_1 \gets n_1 + |\set{s(x) \mid x \in \tilde{T}, s(x) < 0}|$
    \State $S \gets S \setminus \tilde{T}, T \gets T \cup \tilde{T}$
\EndWhile
\State \textbf{return} $(n_0 - n_1)/|S|$ 
\end{algorithmic}
\end{algorithm}

\noindent
\paragraph{PPB-2017 and CelebA datasets.}
For both PPB-2017 and CelebA datasets, feature extraction for images is done using the pre-trained VGG-16 deep network \cite{simonyan2014very}.
%
The network has been pre-trained on the Imagenet \cite{imagenet} dataset.
To extract the feature of any given image, we pass it as input to the network and extract the 4096-dimensional weight vector of the last fully connected layer.
We further reduce the feature vector size to 300 by performing PCA on the set of features of all images in the dataset.

\noindent
\paragraph{TwitterAAE dataset.}
For the TwitterAAE dataset, the authors constructed a demographic language identification model to report the probability of each post being written by a user of any of the following population categories:  non-Hispanic Whites, non-Hispanic Blacks, Hispanics, and Asians.
We filter the dataset to contain only posts for which probability of belonging to non-Hispanic African-American English language model or non-Hispanic White English language model is  $\geq 0.99$. 
This leads to a dataset of around 1.2 million tweets, with around 100k posts belonging to non-Hispanic African-American English language model and 1.06 million posts belonging to non-Hispanic White English language model; we will refer to the two groups of posts as AAE and WHE posts in.

To extract feature vectors corresponding to the Twitter posts, we use a Word2Vec model \cite{mikolov2015computing} pre-trained on 400 million Twitter posts \cite{godin2019improving}.
For any given post, we first use the Word2Vec model to extract features for every word in the post.
Then we take the average of the word features to obtain the feature of the post.

\begin{figure*}[t]
    \centering
    \includegraphics[width=0.9\linewidth]{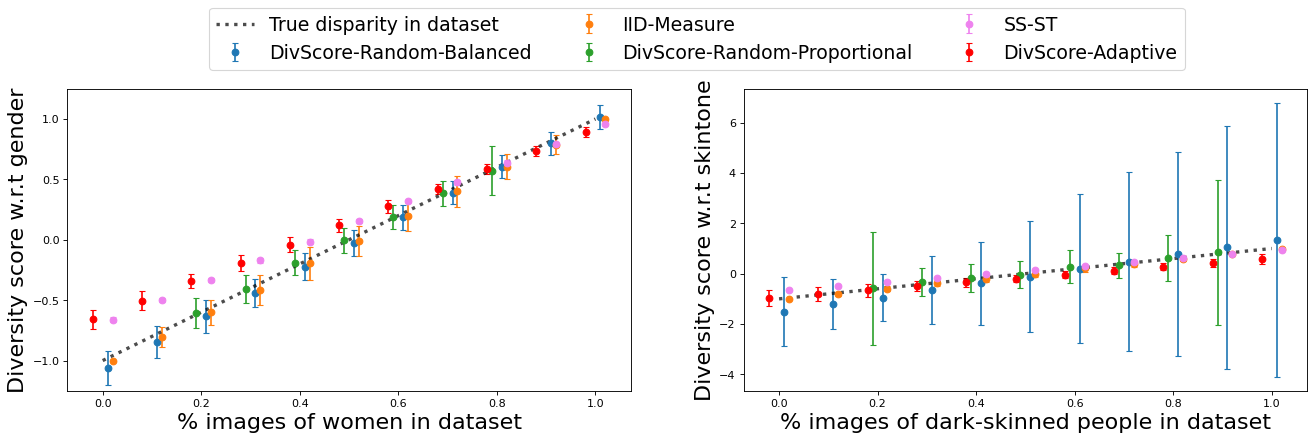}
    \subfloat[Gender protected attribute]{\hspace{.5\linewidth}}
    \subfloat[Skin-tone protected attribute]{\hspace{.5\linewidth}}
    \caption{Results for PPB-2017 dataset using random and adaptive control sets. The plots in this figure are the same as the plots in Figure~\ref{fig:ppb_results_all}, except that we don't put y-axis limitations here to present the complete errorbars for all methods.
    }
    \label{fig:ppb_results_all_alt}
\end{figure*}

\begin{figure*}[t]
    \centering
    \includegraphics[width=0.9\linewidth]{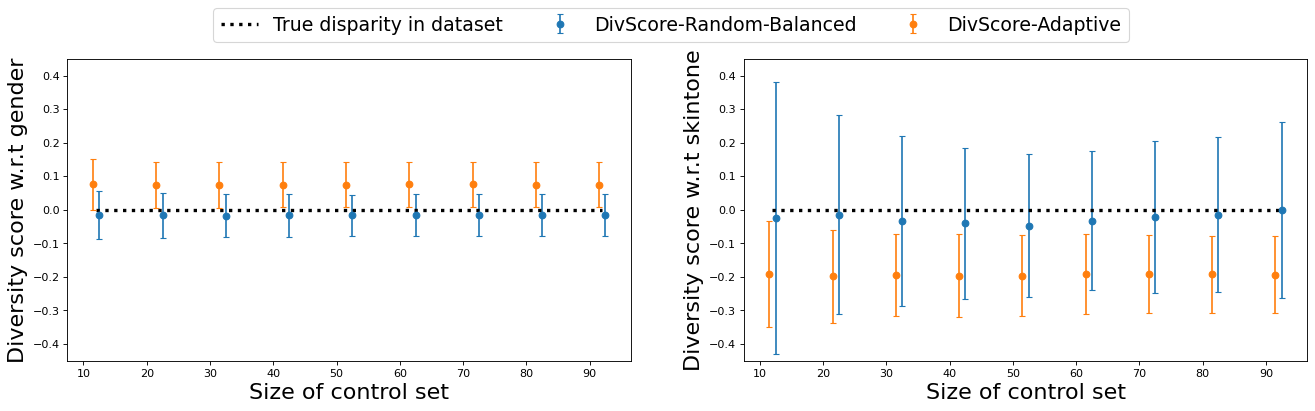}
    \subfloat[Gender protected attribute]{\hspace{.5\linewidth}}
    \subfloat[Skin-tone protected attribute]{\hspace{.5\linewidth}}
    \caption{Results for PPB-2017 dataset using different sized random and adaptive control sets. 
    }
    \label{fig:ppb_diff_control}
\end{figure*}

\begin{figure*}[t]
    \centering
    \includegraphics[width=\linewidth]{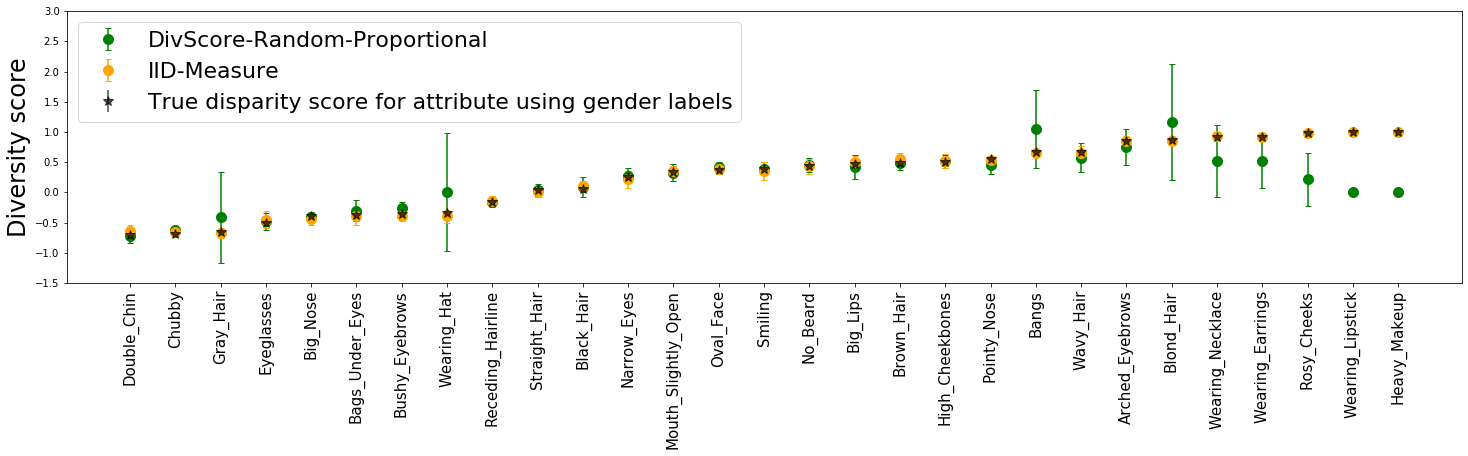}
    \caption{Performance of \textit{DivScore-Random-Proportional} and \textit{IID-Measure} on CelebA dataset}
    \label{fig:celeba_baselines}
\end{figure*}

\begin{figure*}[!ht]
    \centering
    \includegraphics[width=0.5\linewidth]{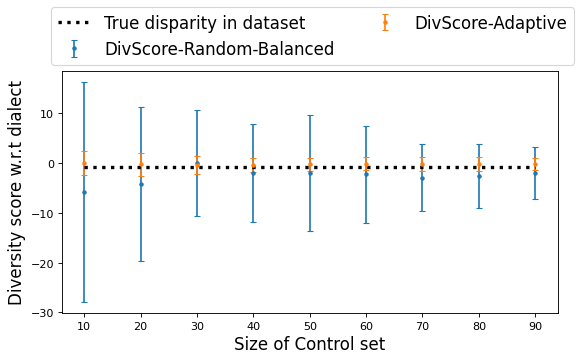}
    \caption{Results for TwitterAAE dataset using different sized random and adaptive control sets. 
    }
    \label{fig:aae_diff_control}
\end{figure*}

\newpage

\section{Other Empirical Results} \label{sec:other_results}

\noindent
\paragraph{Alternate Figure~\ref{fig:ppb_results_all} plot.} First, we present the plots from Figure~\ref{fig:ppb_results_all} without y-axis limitations. This is presented in Figure~\ref{fig:ppb_results_all_alt}.

\noindent
\paragraph{Variation of performance with control set size for PPB-dataset.} Figure~\ref{fig:ppb_diff_control} presents the variation of disparity measure with control set size.
The disparity in the collection is fixed to be 0.
The plots show that \textit{DivScore-Adaptive} can achieve low approximation error using smaller sized control sets than \textit{DivScore-Random-Balanced}.

\noindent
\paragraph{Performance of \textit{DivScore-Random-Proportional} and \textit{IID-Measure} on CelebA dataset.}
Figure~\ref{fig:celeba_baselines} presents the performance of \textit{DivScore-Random-Proportional} and \textit{IID-Measure} on for different facial attributes of CelebA dataset.
As expected, \textit{IID-Measure} has low approximation error, while \textit{DivScore-Random-Proportional} has low approximation error for some attributes and high error for others.
Nevertheless, as discussed in Section~\ref{sec:ppb_discussion}, both baselines need different control sets for collections corresponding to different attributes, and hence, are costly when auditing multiple collections from the same domain.

\noindent
\paragraph{Variation of performance with control set size for TwitterAAE-dataset.} Figure~\ref{fig:aae_diff_control} presents the variation of disparity measure with control set size.
The disparity in the collection is fixed to be -0.826 (which is the disparity of the overall dataset)
The plots show that, once again, \textit{DivScore-Adaptive} can achieve low approximation error using much smaller sized control sets than \textit{DivScore-Random-Balanced}.

\end{document}